\newcommand{\btheta}{\mbox{\boldmath$\theta$}}
\newtheorem{theorem}{Theorem}
\title{Bayesian model selection: Application to adjustment of fundamental physical constants}
\author[1]{Olha Bodnar}
\author[2]{Viktor Eriksson}
\affil[1]{Unit of Statistics, School of Business, \"Orebro University, SE-70182 \"Orebro, Sweden}
\affil[2]{Department of Statistics, Uppsala University, Sweden, SE-75236 Uppsala, Sweden}
\providecommand{\keywords}[1]
{
\small	
\textbf{\textit{Keywords}:} #1
}
\begin{document}

%
\maketitle
\begin{abstract}
The location-scale model is usually present in physics and chemistry in connection to the Birge ratio method for the adjustment of fundamental physical constants such as the Planck constant or the Newtonian constant of gravitation, while the random effects model is the commonly used approach for meta-analysis in medicine. These two competitive models are used to increase the quoted uncertainties of the measurement results to make them consistent. The intrinsic Bayes factor (IBF) is derived for the comparison of the random effects model to the location-scale model, and we answer the question which model performs better for the determination of the Newtonian constant of gravitation. The results of the empirical illustration support the application of the Birge ratio method which is currently used in the adjustment of the CODATA 2018 value for the Newtonian constant of gravitation together with its uncertainty. The results of the simulation study illustrate that the suggested procedure for model selection is decisive even when data consist of a few measurement results.
\end{abstract}

\vspace{1cm}
\keywords{Intrinsic Bayes factor; Birge ratio method; Location-scale model; Random-effects model; Reference prior; Meta-analysis; Interlaboratory comparison study; Newtonian constant of gravitation}

\newpage
\section{Introduction}\label{sec:intro}

Fundamental constants in physics and chemistry, like the Newtonian constant of gravitation or the Planck constant, are usually determined by results of several studies which are performed at different times and places (\citet{codata2016}, \citet{newell2018codata}, \citet{alighanbari2020precise}). In many applications in medicine the consensus value is obtained by pooling the results of clinical studies together (\citet{brockwell2001}, \citet{Lambert2005}, \citet{Higgins2009}, \citet{Rukhin2013}, \citet{bodnar2017bayesian}, \citet{jones2018use}), while interlaboratory comparison study is one of the most important topics in metrology (\citet{mandel-1970}, \citet{rukhin-2003}, \citet{bodnar2020assessing}).

Each individual study reports an estimate of the quantity of interest together with its uncertainty which are based, for example, on measurements obtained in laboratories in physics and chemistry where a great care is taken to determine the measurement margin of error such that it accounts for every identifiable source of error. For that reason, it might therefore be assumed that an individual study estimate is internally consistent, i.e., its uncertainty takes into account of every identifiable (hence explainable) source of uncertainty.

Although the individual studies are internally consistent, they often fail to be externally consistent. More precisely, when the estimates together with their uncertainties from all individual studies are brought together in a scatterplot, it is usually revealed that the dispersion of the estimates is substantially greater than what the reported uncertainties suggest them to be. Such a situation is depicted in Figure \ref{fig:G} where the measurements of the Newtonian constant of gravitation G are presented together with reported uncertainties. Most of the uncertainties provided by laboratories are not able to capture the variation observed in the scatterplot.

\begin{figure}[H]
	\begin{center}
		\includegraphics[scale=0.6, clip=true, trim=0 10 0 10]{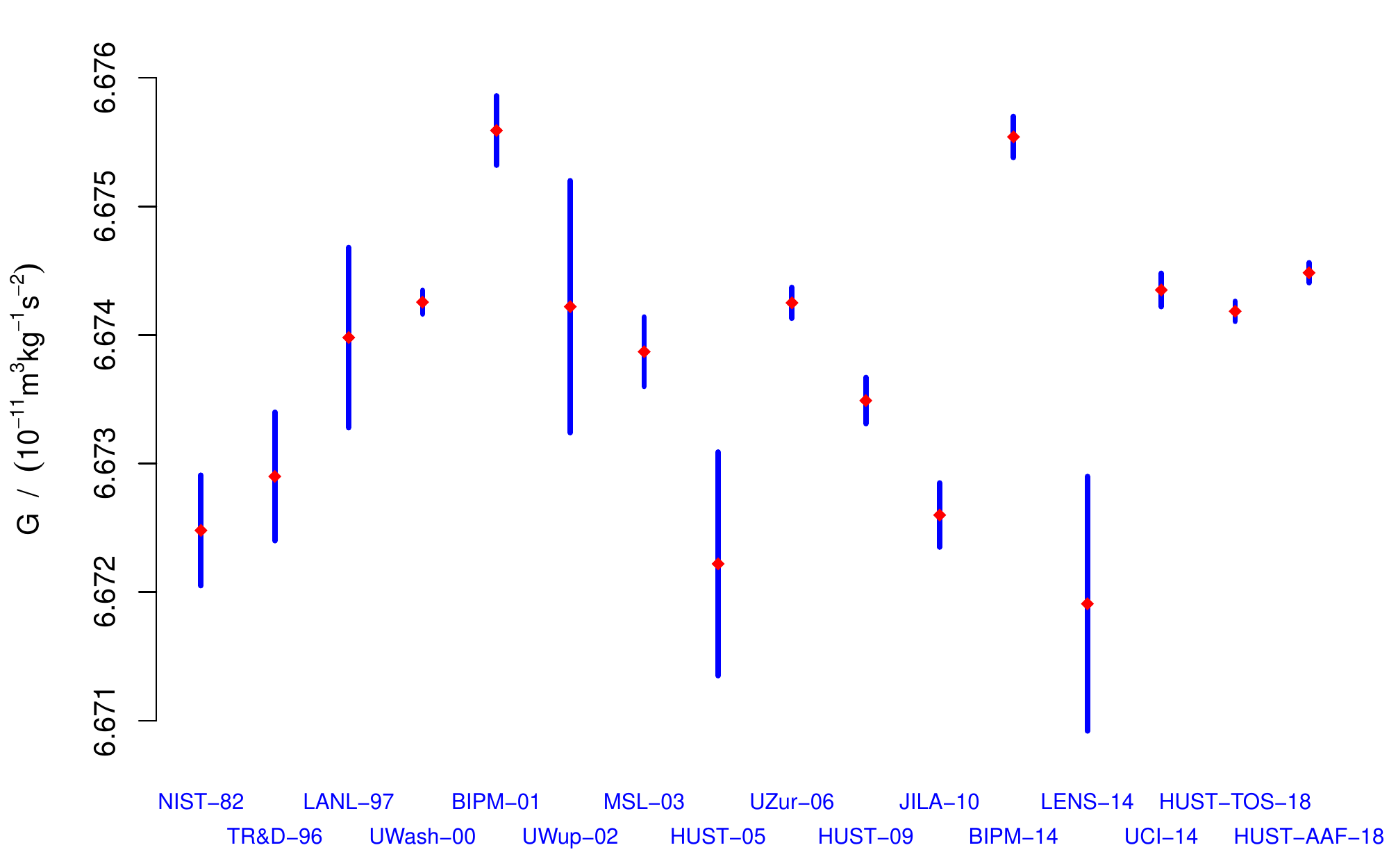}
	\end{center}
 \caption{Newtonian constant of gravitation G. The red diamonds are the measurement results and the blue intervals are the uncertainties provided by each laboratory (data are from Figure 1 in \citet{bodnar2020bayesian}).}
 \label{fig:G}
\end{figure}

The reported uncertainties take into account every identifiable source of error in each individual study, yet these uncertainties fail to explain the full variation between the estimates satisfactory. The extra variation is caused by external inconsistency, i.e., inconsistency between the laboratories or interlaboratory inconsistency. This extra variation is usually labeled as the heterogeneity and it also be referred to as the excess variance or as the dark uncertainty (see, \citet{thompson2011dark}). It is remarkable that the heterogeneity is unexplainable and cannot be resolved easily.

Birge ratio method and random effects meta-analysis seem be the two most popular approaches to account for the dark uncertainty when the results of several studies are pooled together. Birge ratio method is generally favored by physicists and chemists in parameter estimation (\citet{codata2016}). It is related to the Birge ratio introduced by \citet{Birge1932} and is used to adjust for inconsistent data in \citet{weise2000removing} and \citet{codata2016}, among others. The random effects model is generally favored by medical researchers in parameter estimation (see, \citet{hardy1998detecting}, \citet{AdesHiggins2005}, \citet{Turner2015}, \citet{guolo2017random}, \citet{veroniki2019methods}). It has been used in chemistry for interlaboratory comparisons in \citet{mandel-1970} and in \citet{TomanFischerElster2012}. In \citet{BodnarLinkElster2015} objective Bayesian inferences are established for the generalized random effects model and are used to estimate the Planck constant.

The random effects model has been compared to the Birge ratio methods presented as the location-scale model in \citet{bodnar2016evaluation}. The comparison of two models was performed in terms of robustness analysis to model misspecification by computing coverage intervals for the overall mean. The results of the numerical study reveal that the random effects model is more robust to model misspecification than the location-scale model. Namely, when the data were generated from the location-scale model but the random effects model was used to estimate the overall mean, then the coverage probability was roughly 95\%, the significance level used in the simulation study. However, when the data were drawn from the random effects model but the location-scale model was used to estimate the overall mean, then the coverage probability was found to be roughly 0.45\%. As a result, one can remark that if the location-scale model is actually true, then the random effects model can still do a good job at estimating the overall mean, while if the random effects model is actually true, then the application of the location-scale model can lead to unreliable inference for the overall mean. Finally, it can be noted that both models have good coverage probabilities given that the used model is the true one.

The results of \citet{bodnar2016evaluation} present only one aspect of model selection procedures and they do not provide an answer to the question which model is more preferable to the observed data. In the paper we contribute to this challenging task by developing a new approach to distinguish between the two models for dark uncertainty from the viewpoint of objective Bayesian statistics by performing model selection based on intrinsic Bayes factor. There are several reasons to opt for Bayesian model selection (see, e.g., \citet[p.138-140]{berger2001objective}: (i) the Bayes factor is interpreted as an odds factor which is easy to understand; (ii) it is an automatic Ocham's razors since it favor a simple model over a complex model; (iii) the Bayes factor is a consistent method, which means that if any of the compared models is true, then the true model is selected (given some mild conditions are satisfied); (iv) it does not require models to be nested, which makes it possible to compare different types of models in which the parameters may vary in dimension. Some of the difficulties with the Bayes factor is also stated in \citet[p.142]{berger2001objective}. For example, the Bayes factor can be difficult to compute. Also, the Bayes factor does not usually return good answers when we use vague proper priors leading to the application of improper noninformative priors.

Traditionally, the Bayes factor is computed by endowing the model parameters with an informative prior. On the other hand, no information or only vague information about model parameters can be present in practice (see, e.g., \citet{Lambert2005}). As a result, the influence of the prior distribution on the Bayes factor is greater than it is on the estimation of a parameter. This is because the weight of the prior distribution is not reduced with an increasing sample size in the Bayes factor (\cite{berger1996intrinsic}). The Bayes factor is therefore particularly vulnerable to misspecifications in the prior distribution and for this reason we look for an objective noninformative prior, like a Laplace prior (\citet{Laplace1812}), Jeffreys prior (\citet{Jeffreys1946}) or reference prior (\citet{BergerBernardo1992c}) which is obtained by maximizing the Shannon mutual information between the prior and posterior, thus reducing the impact of the prior on posterior (see, \citet{BergerBernardo1992c}, \citet{ClarkeYuan2004}, \citet{berger2009formal}, \citet{bodnar2014analytical}). The Berger and Bernado reference prior is recommended for use in the computation of intrinsic Bayes factor for large sample sizes  (\citet[p.121]{berger1996intrinsic}).

There are various approaches for objective Bayesian model selection, however the methods usually have some limitations or difficulties in their implementations. The traditional Bayes factor computed by endowing an informative proper prior cannot be longer used for objective Bayesian model selection based on a noninformative priors which are usually improper. A modification of the definition of the conventional Bayes factor is required. Several approaches exist in the literature with the ones proposed by \citet{o1995fractional}) and \citet{berger1996intrinsic} seem to be the ones which are mostly used. \citet{o1995fractional} suggested the application of the fractional Bayes factor for the model selection when an improper prior is used, while \citet{berger1996intrinsic} developed the theory of the Bayesian model selection based on the intrinsic Bayes factor. Moreover, \citet[p.121]{berger1996intrinsic} recommended the use of the Berger and Bernado reference prior in the computation of intrinsic Bayes factor for large sample sizes. The benefit with the intrinsic Bayes factor is that it is automatic (objective) in the sense that it only depends on observed data and noninformative prior distribution. It may, however, be computationally intensive and unstable (see, \citet[p.120-121]{berger1996intrinsic}). The intrinsic Bayes factor makes use of a minimal training sample for parameter estimation and in order to make for a stable result it is conventional to take the average or median of the intrinsic Bayes factors computed over all possible training samples. The empirical probability for the Bayesian model selection based on the intrinsic Bayes factor will be introduced in this paper and applied in the empirical illustration about the Newtonian constant of gravitation.

The rest of the paper is organized as follows. In Section \ref{sec:BMS}, we introduce the concept of Bayesian model selection with the special emphasis on the case when the model parameters are endowed with a noninformative prior. The competitive models, namely the location-scale model related to the Birge ratio approach and the random effects model, are presented in Section \ref{sec:model}. Here, we also derive the quantities needed in the computation of the intrinsic Bayes factor. In Section \ref{sec:sim} we provide a numerical comparison of the procedures in the small sample case, while it is implemented to the measurements of the Newtonian constant of gravitation in Section \ref{sec:emp}. Discussion of the obtained results are given in Section \ref{sec:sum}, while the derivation of theoretical results is presented in the appendix (Section \ref{sec:app}).

\section{Bayesian model selection} \label{sec:BMS}

In this section we introduce the Bayes factor, provide its interpretation, and discuss its use for model comparison. Let $M_1, \dots,M_i,\cdots, M_q$ denote the models that we aim to make a comparison between. Under model $M_i$ the data $\mathbf{x}$ follow the probability density function $f_i(\mathbf{x}|\btheta_i)$ with $\btheta_i\in\Theta_i$, which is also referred to as the likelihood function. In the Bayesian approach to model selection we start by assigning for each model $M_i$ a prior probability $P(M_i)$ of the model $M_i$ being correct and a prior distribution $\pi_i(\btheta_i)$ of the parameter $\btheta_i$. Then the posterior probability $P(M_i|\mathbf{x})$ of the model $M_i$ being correct when the sample $\mathbf{x}$ is observed is given by
\begin{equation}
P(M_i|\mathbf{x})=\left(\sum_{j=1}^{q}\frac{P(M_j)}{P(M_i)}B_{ji}(\mathbf{x})\right)^{-1} , \nonumber
\end{equation}
where
	\begin{equation}
	B_{ji}(\mathbf{x})=\frac{m_j(\mathbf{x})}{m_i(\mathbf{x})}=\frac{\int f_j(\mathbf{x}|\btheta_j)\pi_j(\btheta_j)d\btheta_j}{\int f_i(\mathbf{x}|\btheta_i)\pi_i(\btheta_i)d\btheta_i}  \nonumber
	\end{equation}
is the Bayes factor between $M_j$ and $M_i$ and
\begin{equation}
m_i(\mathbf{x})=\int_{\btheta_i \in \Theta_i} f_i(\mathbf{x}|\btheta_i)\pi_i(\btheta_i)d\btheta_i  \nonumber
\end{equation}
is the marginal distribution of the data under model $M_i$, which describes the probability of observing $\mathbf{x}$ under model $M_i$. 

The posterior probability $P(M_i|\mathbf{x})$ is dependent on subjectively chosen prior probabilities $P(M_i)$ for $i=1,2,\dots,q$ of the model $M_i$ being correct. If one prefers the Bayesian model selection procedure to be objective, then she/he shall assign equal prior probabilities to the considered models, that is $P(M_1)=...=P(M_q)=1/q$. In this case the posterior probability $P(M_i|\mathbf{x})$ used for the selection of a model will depend on the Bayes factors $B_{ji}(\mathbf{x})$ only. In a special case, when $q=2$, the Bayes factor $B_{ji}(\mathbf{x})$ can alone be used to perform Bayesian model selection. The Bayes factor is interpreted as the odds in favor of $M_j$ against $M_i$ based on the evidence in the data $\mathbf{x}$. The inequality $B_{ji}>1$ is supportive of model $M_j$, while $B_{ji}<1$ is interpreted as evidence against $M_j$. 
Finally, it is noted that the prior distributions $\pi_i(\btheta_i)$ of the parameter $\btheta_i$ are subjective because they are decided prior to the data, but they are objective if one will use a noninformative prior.

The Bayes factor can also be interpreted as an odds factor if the marginal distribution $m_i(\mathbf{x})$ is proper, such that $\int_{\mathbf{x\in\Omega}}m_i(\mathbf{x})d\mathbf{x}=1$ for all $i=1,\dots,q$. 
However, the latter equality may not always be fulfilled when the model parameters are endowed with an improper prior, denoted by $\pi_i^N(\btheta_i)$, i.e., when objective Bayesian inference are preferable. An improper prior distribution $\pi_i^N(\theta_i)$ cannot have a normalizing constant and its application leads to the improper marginal distribution of data $m_i(\mathbf{x})$. As a result, one cannot longer apply the conventional Bayesian model selection without a modification of the definition of the Bayes factor. In the next section, we describe the intrinsic Bayes factor developed by \citet{berger1996intrinsic} for Bayesian model selection with improper noninformative priors.

\subsection{Intrinsic Bayes factor}\label{sec:IBF}

The intrinsic Bayes factor (IBF) is a solution to the problem that is imposed on the Bayes factor by the improper prior distribution, namely that the Bayes factor has no interpretation as an odds factor under an improper prior distribution. The IBF differs from the ordinary Bayes factor in that it makes use of a training sample. The training sample is employed for the purpose of transforming the improper prior distribution to the proper posterior distribution, which is then used as a prior for the rest of the elements in the sample.

The training sample $\mathbf{x}_{\ell}$ is a subset of $\mathbf{x}$ of size $m$. Let $\mathbf{x}_{(\ell)}=\mathbf{x}-\mathbf{x}_{\ell}$ be the data set with the training sample removed. Then $\mathbf{x}_{\ell}$ and $\mathbf{x}_{(\ell)}$ forms a partition of $\mathbf{x}$. A training sample, $\mathbf{x}_{\ell}$, is called proper if $0<\pi_i^N(\btheta_i|\mathbf{x}_{\ell})<\infty$ for all $M_i$, and minimal if it is proper and no subset of $\mathbf{x}_{\ell}$ is proper. The notation $\pi_i^N(\btheta_i|\mathbf{x}_{\ell})$ stands for the posterior distribution of $\btheta_i$ computed based on the noninformative prior $\pi_i^N(\btheta_i)$ and sample $\mathbf{x}_{\ell}$. As a result, the training sample is used to transform the improper prior $\pi_i^N(\btheta_i)$ to the proper posterior $\pi_i^N(\mathbf{\theta_i|x}_{\ell})$, which is then used in the computation of the conditional distribution of $\mathbf{x}_{(\ell)}$ given the training sample $\mathbf{x}_{\ell}$. It is noted that a part of data, that could have been used for computing the Bayes factor, is lost since it is needed for the specification of the proper posterior $\pi_i^N(\btheta_i|\mathbf{x}_{\ell})$. We are sacrificing a portion of the data set $\mathbf{x}$, namely the training sample $\mathbf{x}_{\ell}$, for parameter estimation instead of using it for model selection. On the other side, we want as much data as possible for the computation of the Bayes factor since the overall goal is to make a comparison between two models. This is why the minimal training sample is preferable.

The minimal training sample is usually of the same size as the number of parameters in the model. But there are exceptions to this rule, if for example the prior distribution is proper to some parameters then the minimal training sample size may be less then the number of parameters in the model (see, \citet{berger1996intrinsic}). As an example, suppose we have a joint prior distribution of two parameters that is improper. Then only two observations from the data set should suffice as a training sample. When the sample size is $n$, then there are $n(n-1)/2$ combinations of choice to select two observations from $\mathbf{x}$. One should pick every combination to compute the IBF, which implies $n(n-1)/2$ computations of the IBF. Then the conclusions are drawn based on a summary of the differently computed IBFs, such as the average IBF or the median IBF.

Let $m_i^N(\mathbf{x}_{\ell})$ be the marginal distribution of $\mathbf{x}_{\ell}$ computed under the noninformative prior $\pi_i^N(\btheta_i)$ and let $m_i^N(\mathbf{x}_{(\ell)}|\mathbf{x}_{\ell})$ denote the conditional distribution of $\mathbf{x}_{(\ell)}$ given $\mathbf{x}_{\ell}$. Then the intrinsic Bayes factor computed for the sample $\mathbf{x}_{(\ell)}$ conditioned on the training sample $\mathbf{x}_{\ell}$ is given by
\begin{equation*}
B_{ji}^I(\mathbf{x}_{(\ell)}|\mathbf{x}_{\ell})=\frac{m_j^N(\mathbf{x}_{(\ell)}|\mathbf{x}_{\ell})}{m_i^N(\mathbf{x}_{(\ell)}|\mathbf{x}_{\ell})}=\frac{\int f_j(\mathbf{x}_{(\ell)}|\theta_j,\mathbf{x}_{\ell})\pi^N_j(\theta_j|\mathbf{x}_{\ell})d\theta_j}{\int f_i(\mathbf{x}_{(\ell)}|\theta_i,\mathbf{x}_{\ell})\pi^N_i(\theta_i|\mathbf{x}_{\ell})d\theta_i} \ , \nonumber
\end{equation*}
which can also be rewritten as
\begin{eqnarray*}
B_{ji}^I(\mathbf{x}_{(\ell)}|\mathbf{x}_{\ell})&=&\frac{m_j^N(\mathbf{x}_{(\ell)},\mathbf{x}_{\ell})}{m_j^N(\mathbf{x}_{\ell})}\bigg/\frac{m_i^N(\mathbf{x}_{(\ell)},\mathbf{x}_{\ell})}
{m_i^N(\mathbf{x}_{\ell})} \\
&=&\frac{m_j^N(\mathbf{x})}{m_j^N(\mathbf{x}_{\ell})}\bigg/\frac{m_i^N(\mathbf{x})}{m_i^N(\mathbf{x}_{\ell})} = \frac{m_j^N(\mathbf{x})}{m_i^N(\mathbf{x})}\times \frac{m_i^N(\mathbf{x}_{\ell})}{m_j^N(\mathbf{x}_{\ell})} =B_{ji}^N(\mathbf{x})\times B_{ij}^N(\mathbf{x}_{\ell}).
\end{eqnarray*}

Recall that the challenge with the conventional Bayes factor was related to improper prior distributions. For two models $M_j$ and $M_i$ with likelihoods $f_j(\mathbf{x}|\btheta_j)$ and $f_i(\mathbf{x}|\btheta_i)$ and with employed improper priors $\pi_j^N(\btheta_j)=c_jh_j(\btheta_j)$ and $\pi_i^N(\btheta_i)=c_ih_i(\btheta_i)$, the IBF of $M_j$ and $M_i$ is given by
\begin{equation}
\begin{aligned}
&B_{ji}^I(\mathbf{x}_{(\ell)}|\mathbf{x}_{\ell})=B_{ji}^N(\mathbf{x})\times B_{ij}^N(\mathbf{x}_{\ell}) \\ \\
&=\frac{c_j}{c_i}\frac{\int f_j(\mathbf{x}|\theta_j)h_j(\theta_j)d\theta_j}{\int f_i(\mathbf{x}|\theta_i)h_i(\theta_i)d\theta_i} \times \frac{c_i}{c_j}\frac{\int f_i(\mathbf{x}_{\ell}|\theta_i)h_i(\theta_i)d\theta_i}{\int f_j(\mathbf{x}_{\ell}|\theta_j)h_j(\theta_j)d\theta_j} \ . \nonumber
\end{aligned}
\end{equation}
The important point is that the arbitrary constants of proportionality $c_j$ and $c_i$ will cancel in the IBF (see, \citet{o1995fractional}). Hence the IBF does not depend on anything arbitrary or subjective and can therefore be used for an objective Bayesian model selection.

We summarize this subsection by saying that a noninformative prior distribution is used for an objective approach to Bayesian model selection. The IBF is a method in which the noninformative improper prior $\pi_i^N(\btheta_i)$ is transformed to the proper posterior $\pi_i^N(\btheta_i|\mathbf{x}_{\ell})$ by the use of a minimal training sample $\mathbf{x}_{\ell}$, while the rest of the data $\mathbf{x}_{(\ell)}$ is employed for model selection resulting in the IBF. The IBF measures the odds in favor of $M_j$ against $M_i$ based on evidence in the data, without the arbitrariness of the constants $c_j$ and $c_i$.

\section{Models for dark uncertainty} \label{sec:model}
In this section we will establish all theoretical results that are needed to compute the IBF for the location-scale model and the random effects model. Both of these models are used in the determination of physical constants and in the meta-analysis to form a consensus value while accounting for an unexplainable heterogeneity. We first present the location-scale model and derive the results needed for the calculation of the IBF. Then, we treat the random effects model likewise. At the end of this section the IBF used for the comparison of these two models is provided.

\subsection{Location-scale model} \label{sec:LSM}
The location-scale model is related to the Birge method, which is a popular method in physics, chemistry, and metrology for adjusting uncertainties to adapt for external inconsistency (see, \citet{Birge1932}, \citet{codata2016}). The Birge method is used under the assumption that due to external inconsistency, i.e., heterogeneity, the reported uncertainty for each estimate is understated. The Birge method multiplies each uncertainty by a common factor that is called the Birge ratio which is a sort of averaging of uncertainties. As a result all the uncertainties are adjusted to be uniformly understated and hence mutually consistent. In \citet{bodnar2014adjustment} the Birge ratio is studied within a Bayesian framework in relation to the location-scale model.

Let $\mathbf{V}$ be a known positive definite matrix, $\mu$ an unknown location parameter and $\tau_{LS}$ an unknown scale parameter. Then the location-scale model $M_{LS}$ of the data $\mathbf{x}$ is defined by the following two equations:
\begin{eqnarray}\label{model:LSM}
\mathbf{x}&=&\mu\mathbf{1}+\boldsymbol{\varepsilon}_{LS} \quad \text{with} \quad \boldsymbol{\varepsilon} \sim N(\mathbf{0}, \sigma_{LS}^2\mathbf{U}).
\end{eqnarray}

The location-scale model assumes that the measurement data are obtained from a multivariate normal distribution with the mean $\mu\mathbf{1}$ and the covariance matrix $\tau_{LS}^2\mathbf{U}$, i.e., $\mathbf{x}|\mu, \tau_{LS}, M_{LS}\sim N(\mu\mathbf{1},\tau_{LS}^2\mathbf{U})$, where $\mathbf{1}$ stands for the $n$-dimensional vector of ones. The positive definite matrix $\mathbf{U}$ contains all the reported uncertainties and covariances of the measurement results. Namely, the diagonals of $\mathbf{U}$ contain the variances $u_i^2$ of the estimates and the off-diagonal entries contain covariances $\rho_{ij}u_iu_j$, where $\rho_{ij}$ is the correlation between the $i^{th}$ and the $j^{th}$ study. The factor $\tau_{LS}$ is an unknown quantity that represents the unexplainable interlaboratory heterogeneity. The overall mean $\mu$ is the target parameter in many applications in physics, chemistry and medicine, while $\tau_{LS}$ is a nuisance parameter included in the model to capture the heterogeneity. The motivation behind the model application is that the elements in $\mathbf{U}$ can be understated due to the presence of heterogeneity. By multiplying $\mathbf{U}$ by $\tau_{LS}^2$, the elements in $\mathbf{U}$ then become uniformly adjusted for by the heterogeneity. On the other side, the reported correlations $\rho_{ij}$ become unchanged under this adjustment. The notation $\mathbf{x}|\mu, \tau_{LS}, M_{LS}$ makes it explicit that the data are conditioned on the two parameters $\mu$ and $\tau_{LS}$ under the location-scale model $M_{LS}$.

The likelihood function of $\mathbf{X}$, when the location-scale model is assumed, is given by
\begin{equation}\label{lik:LSM}
	f(\mathbf{x}|\mu, \tau_{LS}, M_{LS})=\frac{\tau_{LS}^{-n}}{(2\pi)^{n/2}} \mathsf{exp}\bigg(-\frac{1}{2\tau_{LS}^{2}}(\mathbf{x}-\mu\mathbf{1})^{T}\mathbf{U}^{-1}(\mathbf{x}-\mu\mathbf{1})\bigg).
\end{equation}
The Berger \& Bernado reference prior has been derived for the general location-scale model in \citet{FernandezSteel1999} and is given by
\begin{equation}\label{prior:LSM}
\pi^N(\tau_{LS})=\pi^N(\mu,\tau_{LS})\propto \frac{1}{\tau_{LS}} \ .
\end{equation}

The marginal distribution of $\mathbf{x}$ and of the training sample $\mathbf{x}_{\ell}$ are derived in Theorem \ref{th1} whose proof is given in the appendix (see, Section \ref{sec:app}).

\begin{theorem}\label{th1}
Let $n>2$ and let $\mathbf{U}$ be positive definite. Then under the location-scale model \eqref{model:LSM} and reference prior \eqref{prior:LSM},
\begin{enumerate}[(i)]
\item the marginal distribution of the whole sample is given by
\begin{equation}\label{m:LSM}
	m(\mathbf{x}|M_{LS})=\frac{\Gamma(\frac{n-1}{2})\big(\mathbf{x}^T\mathbf{Q}\mathbf{x}\big)^{-\frac{n-1}{2}}}{\big(\mathsf{det}(\mathbf{U})\big)^{1/2} 2\pi^{\frac{n-1}{2}}\sqrt{\mathbf{1}^T\mathbf{U}^{-1}\mathbf{1}}} \ ,
\end{equation}
where
\begin{equation}\label{Q:LSM}
\mathbf{Q}= \mathbf{U}^{-1}-\frac{\mathbf{U}^{-1}\mathbf{1}\mathbf{1}^T\mathbf{U}^{-1}}{\mathbf{1}^T\mathbf{U}^{-1}\mathbf{1}}
\end{equation}
\item The size of a minimal training sample $\mathbf{x}_{\ell}$ is given by $m=2$, i.e., $\mathbf{x}_{\ell}=\{x_i,x_j\}$, where $x_i,x_j\in\mathbf{x}$ and $i\neq j$. Moreover, the marginal distribution of the training sample is given by
\begin{equation}\label{ml:LSM}
m(\mathbf{x}_{\ell}|M_{LS})=\frac{1}{2\sqrt{(\mathsf{det}(\mathbf{U}_{\ell}))(\mathbf{x}_{\ell}^T\mathbf{Q}_{\ell}\mathbf{x_{\ell}})(\mathbf{1}_2^T\mathbf{U}_{\ell}^{-1}\mathbf{1}_2)}},
\end{equation}
where $\mathbf{U}_{\ell}$ is obtained from $\mathbf{U}$ by taking its elements lying on the intersections of the $\ell$ rows and $\ell$ columns,
\begin{equation}\label{Ql:LSM}
	\mathbf{Q}_{\ell}=\mathbf{U}^{-1}_{\ell}-\frac{\mathbf{U}_{\ell}^{-1}\mathbf{1}_2\mathbf{1}_2^T\mathbf{U}_{\ell}^{-1}}{\mathbf{1}_2^T\mathbf{U}_{\ell}^{-1}\mathbf{1}_2}
	\quad \text{with} \quad \mathbf{1}_2=(1,1)^T .
\end{equation}
\end{enumerate}	
\end{theorem}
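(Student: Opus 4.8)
The plan is to evaluate the double integral defining the marginal by integrating out the location parameter $\mu$ first and the scale parameter $\tau_{LS}$ second, and then to identify the minimal training sample size together with its marginal as the $n=2$ specialization of part (i).

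For part (i) I would begin from
\[
m(\bx|M_{LS})=\int_0^\infty\!\!\int_{-\infty}^\infty f(\bx|\mu,\tau_{LS},M_{LS})\,\pi^N(\mu,\tau_{LS})\,d\mu\,d\tau_{LS},
\]
and complete the square in $\mu$ inside the quadratic form. Writing $a=\mathbf{1}^T\mathbf{U}^{-1}\mathbf{1}$ and $b=\mathbf{1}^T\mathbf{U}^{-1}\bx$, one has $(\bx-\mu\mathbf{1})^T\mathbf{U}^{-1}(\bx-\mu\mathbf{1})=a(\mu-b/a)^2+\bx^T\mathbf{U}^{-1}\bx-b^2/a$, and the $\mu$-free remainder equals $\bx^T\mathbf{Q}\bx$ with $\mathbf{Q}$ as in \eqref{Q:LSM}, since $b^2/a=\bx^T\mathbf{U}^{-1}\mathbf{1}\mathbf{1}^T\mathbf{U}^{-1}\bx/(\mathbf{1}^T\mathbf{U}^{-1}\mathbf{1})$. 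The inner Gaussian integral over $\mu$ then contributes $\sqrt{2\pi\tau_{LS}^2/a}$; after combining the prior factor $\tau_{LS}^{-1}$, the likelihood factor $\tau_{LS}^{-n}$, and the $\tau_{LS}$ emerging from the $\mu$-integral, I am left with $\int_0^\infty\tau_{LS}^{-n}\exp(-\bx^T\mathbf{Q}\bx/(2\tau_{LS}^2))\,d\tau_{LS}$.

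The substitution $s=\tau_{LS}^{-2}$ converts this into a Gamma integral, namely $\tfrac12\int_0^\infty s^{(n-3)/2}\exp(-\bx^T\mathbf{Q}\bx\,s/2)\,ds=\tfrac12\,\Gamma(\tfrac{n-1}{2})\,(2/\bx^T\mathbf{Q}\bx)^{(n-1)/2}$, which converges because $n>2$ and $\bx^T\mathbf{Q}\bx>0$; the latter holds whenever $\bx$ is not proportional to $\mathbf{1}$, as $\mathbf{Q}$ is positive semidefinite of rank $n-1$ with kernel spanned by $\mathbf{1}$. Reassembling the powers of $2$ and $\pi$, together with the determinant factor $(\mathsf{det}\,\mathbf{U})^{-1/2}$ carried by the normal density, then yields \eqref{m:LSM}. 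This final bookkeeping of constants is the one genuinely error-prone step, but it is routine.

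For part (ii) the key point is that the $\tau_{LS}$-integral above converges at $\tau_{LS}\to\infty$ only when $n>1$ and at $\tau_{LS}\to0$ only when the relevant quadratic form is strictly positive. A single observation forces $\mathbf{Q}_\ell=0$ (equivalently the tail integral $\int^\infty\tau_{LS}^{-1}\,d\tau_{LS}$ diverges), so one datum cannot make the posterior proper; two distinct observations give $\mathbf{x}_\ell^T\mathbf{Q}_\ell\mathbf{x}_\ell>0$, which fails only if $x_i=x_j$ because $\mathbf{Q}_\ell$ has rank one with kernel spanned by $\mathbf{1}_2$. Hence $m=2$ is minimal. Finally, substituting $n=2$ into \eqref{m:LSM} and using $\Gamma(\tfrac12)=\sqrt{\pi}$, so that $\sqrt{\pi}$ cancels against the $2\pi^{1/2}$ in the denominator, produces \eqref{ml:LSM}.
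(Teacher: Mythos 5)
Your proof is correct, and it reaches the paper's formulas by integrating in the opposite order from the paper. The paper substitutes $\zeta=\tau_{LS}^2$ and integrates the scale out \emph{first}, recognizing an inverse-gamma kernel, and then identifies the remaining $\mu$-integral as an (unnormalized) Student-$t$ density; you instead complete the square and perform the Gaussian $\mu$-integral first, then reduce the residual $\tau_{LS}$-integral to a gamma integral via $s=\tau_{LS}^{-2}$. Since the integrand is nonnegative, Tonelli's theorem makes either order legitimate, and your constant bookkeeping does reproduce \eqref{m:LSM}: the factor $(2\pi)^{(1-n)/2}\cdot 2^{(n-1)/2}\cdot\tfrac{1}{2}\,\Gamma\big(\tfrac{n-1}{2}\big)\,a^{-1/2}$ with $a=\mathbf{1}^T\mathbf{U}^{-1}\mathbf{1}$ collapses to exactly the stated normalization. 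Your route is slightly more elementary (one Gaussian and one gamma integral, no need to recognize a $t$-density), at the cost of checking constants by hand that the paper gets for free from the $t$ normalization. For part (ii) you are in fact sharper than the paper: where the paper only remarks that for $m=1$ the marginal fails to be a proper density, you pinpoint the failure as the divergence of $\int \tau_{LS}^{-1}\,d\tau_{LS}$ forced by $\mathbf{Q}_{\ell}=0$ (the kernel of $\mathbf{Q}$ being spanned by $\mathbf{1}$), and you note the measure-zero caveat $x_i\neq x_j$ needed for $\mathbf{x}_{\ell}^T\mathbf{Q}_{\ell}\mathbf{x}_{\ell}>0$, which the paper leaves implicit. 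Both arguments then obtain \eqref{ml:LSM} identically, by specializing the part-(i) formula to $n=2$ and using $\Gamma(\tfrac{1}{2})=\sqrt{\pi}$.
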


The results of Theorem \ref{th1} provide all quantities which are need for the computation of the IBF as shown in Section \ref{sec:IBF-LSM-REM}.

\subsection{Random effects model} \label{sec:REM}
The random effects model is a classical model that has been studied from both frequentist statistics (\citet{Cochran37}, \citet{Cochran54}, \citet{YatesCochran38}, \citet{Rao1997}, \citet{Searle2006}) and Bayesian statistics (\citet{Hill65}, \citet{TiaoTan1965}, \citet{BrowneDraper06}, \citet{Gelman06}, \citet{BodnarLinkElster2015}). In metrology it has been considered in \citet{Kacker2004}, \citet{TomanFischerElster2012}, while it is one of the mostly used in performing meta-analyses in medicine (see, \citet{Lambert2005}, \citet{Turner2015}, \citet{veroniki2019methods}).

The random effects model $M_{RE}$ is defined by
\begin{eqnarray}\label{model:REM}
\mathbf{x}&=&\mu\mathbf{1}+\boldsymbol{\lambda}_{RE}+\boldsymbol{\varepsilon}_{RE}
\quad \text{with} \quad \boldsymbol{\lambda}_{RE} \sim N(\mathbf{0}, \tau_{RE}^2\mathbf{I})
\quad \text{and} \quad \boldsymbol{\varepsilon}_{RE} \sim N(\mathbf{0},\mathbf{U}).
\end{eqnarray}
Based on the above presentation we get that the random vector $\mathbf{x}$ is multivariate normally distributed, i.e., $\mathbf{x}|\mu,\tau_{RE},M_{RE}\sim N(\mu\mathbf{1},\mathbf{U}+\tau_{RE}^2\mathbf{I})$. In the random effects model an adjustment for the heterogeneity is made to $\mathbf{U}$ by adding a common term $\tau_{RE}^2$ to every element in the diagonal of $\mathbf{U}$, where $\sigma_{RE}^2$ represents the unexplainable interlaboratory heterogeneity. Since only the diagonals of $\mathbf{U}$ are modified by $\tau_{RE}^2$, the variances $u_i^2$ are adjusted for by the heterogeneity while the covariances in $\mathbf{U}$ are left unaltered.

Under the assumption of the random effects model the likelihood function is expressed as
	\begin{equation*}
f(\mathbf{x}|\mu, \tau_{RE}, M_{RE})=\frac{\big(\mathsf{det}(\mathbf{U}+\tau_{RE}^2\mathbf{I})\big)^{-\frac{1}{2}}}{(2\pi)^{n/2}}\mathsf{exp}\bigg(-\frac{1}{2}(\mathbf{x}-\mu\mathbf{1})^T
\big(\mathbf{U}+\tau_{RE}^2\mathbf{I}\big)^{-1}(\mathbf{x}-\mu\mathbf{1})\bigg) \ .
	\end{equation*}
The Berger \& Bernado reference prior has been derived in \citet{BodnarLinkElster2015} and it is given by
\begin{equation}\label{prior:REM}
\pi^N(\tau_{RE})=\pi^N(\mu,\tau_{RE})\propto \sqrt{\tau_{RE}^{2}\cdot tr((\mathbf{U}+\tau_{RE}^2\mathbf{I})^{-2})} \ .
\end{equation}
It has been proven in \citet[p.32]{BodnarLinkElster2015} that $\pi^N(\mu,\tau_{RE}|\mathbf{x})$ is proper for the random effects model and the Berger and Bernado reference prior, if $n>1$ . This leads to the conclusion that the minimal training sample size is equal to $m=2$. Finally, the marginal distribution of $\mathbf{x}$ and of the training sample $\mathbf{x}_{\ell}$ are presented in Theorem \ref{th2}.

\begin{theorem} \label{th2}
Let $n>2$ and let $\mathbf{U}$ be positive definite. Then under the random effects model \eqref{model:REM} and reference prior \eqref{prior:REM},
\begin{enumerate}[(i)]
\item the marginal distribution of the whole sample is given by
\begin{equation}\label{m:REM}
	m(\mathbf{x}|M_{RE})=\int_0^{\infty}\frac{(2\pi)^{-\frac{n-1}{2}}\big(\mathsf{det}(\mathbf{U}+\tau_{RE}^2\mathbf{I})\big)^{-1/2}}
	{\sqrt{\mathbf{1}^T(\mathbf{U}+\tau_{RE}^2\mathbf{I})^{-1}\mathbf{1}}}\mathsf{exp}\Big(-\frac{1}{2}\mathbf{x}^T\mathbf{Q}(\tau_{RE}^2)
	\mathbf{x}\Big)\pi^N(\tau_{RE}) \ d\tau_{RE} \ ,
\end{equation}
where
\begin{equation}\label{Q:REM}
\mathbf{Q}(\tau_{RE}^2)=(\mathbf{U}+\tau_{RE}^2\mathbf{I})^{-1}
-\frac{(\mathbf{U}+\tau_{RE}^2\mathbf{I})^{-1}\mathbf{1}\mathbf{1}^T(\mathbf{U}+\tau_{RE}^2\mathbf{I})^{-1}}{\mathbf{1}^T(\mathbf{U}+\tau_{RE}^2\mathbf{I})^{-1}\mathbf{1}}.
\end{equation}
\item The size of a minimal training sample $\mathbf{x}_{\ell}$ is given by $m=2$, i.e., $\mathbf{x}_{\ell}=\{x_i,x_j\}$, where $x_i,x_j\in\mathbf{x}$ and $i\neq j$. Moreover, the marginal distribution of the training sample is given by
\begin{equation}\label{ml:REM}
	m(\mathbf{x}_{\ell}|M_{RE})=\int_0^{\infty}\frac{(2\pi)^{-\frac{1}{2}}\big(\mathsf{det}(\mathbf{U}_{\ell}+\tau_{RE}^2\mathbf{I})\big)^{-1/2}}
	{\sqrt{\mathbf{1}_2^T(\mathbf{U}_{\ell}+\tau_{RE}^2\mathbf{I})^{-1}\mathbf{1}_2}}\mathsf{exp}\Big(-\frac{1}{2}\mathbf{x}_{\ell}^T\mathbf{Q}_{\ell}(\tau_{RE}^2)
	\mathbf{x}_{\ell}\Big)\pi^N(\tau_{RE}) \ d\tau_{RE} \ ,
\end{equation}
where $\mathbf{U}_{\ell}$ is obtained from $\mathbf{U}$ by taking its elements lying on the intersections of the $\ell$ rows and $\ell$ columns, and
\begin{equation}\label{Ql:REM} \mathbf{Q}_{\ell}(\tau_{RE}^2)=(\mathbf{U}_{\ell}+\tau_{RE}^2\mathbf{I})^{-1}-\frac{(\mathbf{U}_{\ell}+\tau_{RE}^2\mathbf{I})^{-1}\mathbf{1}_2\mathbf{1}_2^T(\mathbf{U}_{\ell}+\tau_{RE}^2\mathbf{I})^{-1}}{\mathbf{1}_2^T(\mathbf{U}_{\ell}+\tau_{RE}^2\mathbf{I})^{-1}\mathbf{1}_2}
	\quad \text{with} \quad \mathbf{1}_2=(1,1)^T .
\end{equation}
\end{enumerate}		
\end{theorem}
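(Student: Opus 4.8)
The plan is to obtain $m(\mathbf{x}|M_{RE})$ directly from its definition
\begin{equation*}
m(\mathbf{x}|M_{RE})=\int_0^{\infty}\int_{-\infty}^{\infty} f(\mathbf{x}|\mu,\tau_{RE},M_{RE})\,\pi^N(\tau_{RE})\,d\mu\,d\tau_{RE},
\end{equation*}
and to exploit that the reference prior \eqref{prior:REM} is free of $\mu$, so that $\mu$ may be integrated out first with $\tau_{RE}$ held fixed. Writing $\mathbf{U}+\tau_{RE}^2\mathbf{I}$ for the conditional covariance, the first step is to expand the exponent $(\mathbf{x}-\mu\mathbf{1})^T(\mathbf{U}+\tau_{RE}^2\mathbf{I})^{-1}(\mathbf{x}-\mu\mathbf{1})$ as a quadratic in $\mu$ and complete the square. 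This separates it into a perfect-square term $\mathbf{1}^T(\mathbf{U}+\tau_{RE}^2\mathbf{I})^{-1}\mathbf{1}\,(\mu-\hat{\mu})^2$, with $\hat{\mu}=\big(\mathbf{1}^T(\mathbf{U}+\tau_{RE}^2\mathbf{I})^{-1}\mathbf{1}\big)^{-1}\mathbf{1}^T(\mathbf{U}+\tau_{RE}^2\mathbf{I})^{-1}\mathbf{x}$, and a remainder that equals precisely $\mathbf{x}^T\mathbf{Q}(\tau_{RE}^2)\mathbf{x}$ with $\mathbf{Q}(\tau_{RE}^2)$ as in \eqref{Q:REM}.

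Next I would evaluate the Gaussian integral over $\mu\in\R$, which contributes the factor $\sqrt{2\pi/\big(\mathbf{1}^T(\mathbf{U}+\tau_{RE}^2\mathbf{I})^{-1}\mathbf{1}\big)}$; combining its $\sqrt{2\pi}$ with the $(2\pi)^{-n/2}$ carried by the likelihood lowers the power to $(2\pi)^{-(n-1)/2}$. What remains, after reinserting $\pi^N(\tau_{RE})$ and integrating over $\tau_{RE}\in(0,\infty)$, is exactly \eqref{m:REM}. The key structural point, in contrast to Theorem \ref{th1}, is that $\tau_{RE}$ enters nonlinearly both through $\mathsf{det}(\mathbf{U}+\tau_{RE}^2\mathbf{I})$ and through $(\mathbf{U}+\tau_{RE}^2\mathbf{I})^{-1}$, so no closed form for the $\tau_{RE}$-integral is available and the marginal must be reported as the stated one-dimensional integral; its finiteness for $n>2$ is guaranteed by the properness of the posterior established in \citet[p.32]{BodnarLinkElster2015}.

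For part (ii) the minimal training sample size is read off from that same properness result. Since $\pi^N(\mu,\tau_{RE}|\mathbf{x})$ is proper exactly when $n>1$, a single observation yields an improper, hence non-proper, posterior, whereas any two distinct observations $\{x_i,x_j\}$ already give a proper one; as a two-element set has no proper subset that is itself proper, the minimal size is $m=2$. The marginal $m(\mathbf{x}_{\ell}|M_{RE})$ is then obtained by repeating the computation of part (i) verbatim with $n$ replaced by $2$, $\mathbf{x}$ by $\mathbf{x}_{\ell}$, the covariance $\mathbf{U}$ by its $2\times2$ submatrix $\mathbf{U}_{\ell}$, and $\mathbf{1}$ by $\mathbf{1}_2$; this produces the power $(2\pi)^{-1/2}$ and the matrix $\mathbf{Q}_{\ell}(\tau_{RE}^2)$ of \eqref{Ql:REM}, giving \eqref{ml:REM}.

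I expect the main obstacle to be conceptual rather than computational: the completing-the-square and Gaussian-integration steps are routine, but one must verify carefully that only $\mu$ can be eliminated analytically and that the residual $\tau_{RE}$-integral is simultaneously convergent and genuinely irreducible, which is the reason the random effects marginals are expressed as integrals rather than in the closed form available for the location-scale model.
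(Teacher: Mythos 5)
Your proposal matches the paper's own proof essentially step for step: the same completion of the square in $\mu$ yielding $\mathbf{x}^T\mathbf{Q}(\tau_{RE}^2)\mathbf{x}$, the same Gaussian integration over $\mu$ reducing $(2\pi)^{-n/2}$ to $(2\pi)^{-(n-1)/2}$, the residual one-dimensional $\tau_{RE}$-integral left unevaluated, and part (ii) obtained by rerunning the computation with $n=2$ together with the properness result of \citet{BodnarLinkElster2015} to fix the minimal training sample size at $m=2$. Your write-up is in fact somewhat more explicit than the paper's (which disposes of part (ii) in one sentence), and no gap or divergence in method is present.
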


The proof of Theorem \ref{th2} is given in the appendix (see, Section \ref{sec:app}). Both the marginal distributions of the whole sample and of the training sample are present as a one-dimensional integral which cannot be derived analytically. On the other side, they can be computed with a high precision by numerical integration, for example, by using the Simpson rule (see, \citet{givens2012computational}).

Since the range of $\tau_{RE}$ is from $0$ to $+\infty$, we make a transformation under the integrals in \eqref{m:REM} and \eqref{ml:REM} following \citet{bodnar2020bayesian}, defined by $\tau_{RE}=\mathsf{tan}(\omega)$ with the Jacobian equal to $|\frac{d}{d\omega}\mathsf{tan}(\omega)|=1/\mathsf{cos}(\omega)^2$. It is a one-to-one map of the interval $\omega\in(0,\pi/2)$ into $\tau_{RE}\in(0,\infty)$. This allows us to compute the integral over the bounded range $\omega\in(0,\pi/2)$ instead over the unbounded range $\tau_{RE}\in(0,\infty)$. In the case of \eqref{m:REM} we then get
\begin{eqnarray*}
m(\mathbf{x}|M_{RE})&=&\int_0^{\pi/2}\frac{(2\pi)^{-\frac{n-1}{2}}\big(\mathsf{det}(\mathbf{U}+\mathsf{tan}^2(\omega)\mathbf{I})\big)^{-1/2}}{\sqrt{\mathbf{1}^T
(\mathbf{U}+\mathsf{tan}^2(\omega)\mathbf{I})^{-1}\mathbf{1}}}
\frac{\mathsf{exp}\Big(-\frac{1}{2}\mathbf{x}^T\mathbf{Q}(\mathsf{tan}^2(\omega))\mathbf{x}\Big)}{\mathsf{tan}(\omega)\mathsf{cos}^2(\omega)} \ d\omega,
\end{eqnarray*}
with $\mathbf{Q}(\mathsf{tan}^2(\omega))$ defined in \eqref{Q:REM}. A similar transformation is also used in the computation of the integral in \eqref{ml:REM}.

\subsection{Location-scale model versus random effects model}\label{sec:IBF-LSM-REM}

The difference between the location-scale model and the random effects model is how they take into account for the heterogeneity which arise when the results of several studies are combined together. The choice stands between letting the heterogeneity be modeled as an multiplicative correction factor (location-scale model) or as an additive correction term (random effects model). In order to make a selection between the location-scale model and the random effects model we opt for the application of the IBF presented in Section \ref{sec:IBF} with the marginal distributions of the whole sample and of the training sample as derived in Theorems \ref{th1} and \ref{th2}.

The IBF for comparing the random effects model ($M_{RE}$) to the location-scale model ($M_{LS}$) and the training sample $\mathbf{x}_{\ell}$ is given by
\begin{equation}\label{IBF_RE_LS}
B_{M_{RE}M_{LS}}^I(\mathbf{x}_{(\ell)}|\mathbf{x}_{\ell})= B_{M_{RE}M_{LS}}^N(\mathbf{x})\times B_{M_{LS}M_{RE}}^N(\mathbf{x}_{\ell})
	=\frac{m(\mathbf{x}|M_{RE})}{m(\mathbf{x}|M_{LS})}\times \frac{m(\mathbf{x}_{\ell}|M_{LS})}{m(\mathbf{x}_{\ell}|M_{RE})},
\end{equation}
where $m(\mathbf{x}|M_{RE})$, $m(\mathbf{x}|M_{LS})$, $m(\mathbf{x}_{\ell}|M_{LS})$, and $m(\mathbf{x}_{\ell}|M_{RE})$ are given in \eqref{m:REM}, \eqref{m:LSM}, \eqref{ml:LSM}, and \eqref{ml:REM}, respectively. The value of $B_{M_{RE}M_{LS}}^I(\mathbf{x}_{(\ell)}|\mathbf{x}_{\ell})$ larger than one indicates that the random effects model is preferable, while the inequality $B_{M_{RE}M_{LS}}^I(\mathbf{x}_{(\ell)}|\mathbf{x}_{\ell})<1$ suggests the application of the location-scale model.

Both the location-scale model and the random effects model have been shown to have a minimal training sample size of $m=2$ (see, Theorems \ref{th1} and \ref{th2}). With a total sample size of $n$, we have a number of $n(n-1)/2$ combinations to select two observations out of $n$. Let $\mathcal{X}_{\ell} = (\mathbf{x}_{\ell}^1, \mathbf{x}_{\ell}^2,\cdots, \mathbf{x}_{\ell}^{l-1}, \mathbf{x}_{\ell}^l)$ be the set of training samples, there are $L=n(n-1)/2$ number of elements in $\mathcal{X}_{\ell}$. The IBF \eqref{IBF_RE_LS} is computed for every such possible selection and is aggregated over the possible training samples by computing the average or the median. The reason is that the mean and median of the IBF over $\mathcal{X}_{\ell}$ has increased stability in comparison with the IBF of an arbitrary $\mathbf{x}_{\ell}^i$. If the sample size $n$ is either too small (too slight stability improvement) or too large (too long computing time), it is then possible to use the expected mean discussed in \citet[p.113-114]{berger1996intrinsic}.

\citet[p.149]{berger2001objective} noted that it is recommendable to put the more ''complex'' model in the numerator of \eqref{IBF_RE_LS}, when the average of the IBF is computed over the training samples. One of the reasons is that some large values of the IBF computed for some training samples might dominate in the resulting value of the average IBF, although most of the IBFs are close to zero. In order to symmetrize the impact of large and small values of the IBF computed for each of the training samples, the logarithmic transformation is used before calculating the average IBF and the median IBF. It leads to the following formulas of the average IBF and the median IBF given by
\begin{equation}\label{aIBF_RE_LS}
aB_{M_{RE}M_{LS}}^I=\frac{1}{L}\sum_{\ell} \log\left(B_{M_{RE}M_{LS}}^I(\mathbf{x}_{(\ell)}|\mathbf{x}_{\ell})\right)
\end{equation}
and
\begin{equation}\label{mIBF_RE_LS}
mB_{M_{RE}M_{LS}}^I=\text{median}\left(\log\left(B_{M_{RE}M_{LS}}^I(\mathbf{x}_{(\ell)}|\mathbf{x}_{\ell})\right)\right),
\end{equation}
respectively.

Besides that, we also computed the empirical probability of the random effect model to be preferable which is expressed as
\begin{equation}\label{epIBF_RE_LS}
epB_{M_{RE}M_{LS}}^I=\frac{1}{L}\sum_{\ell} \mathds{1}_{(0,\infty)}\left(\log\left(B_{M_{RE}M_{LS}}^I(\mathbf{x}_{(\ell)}|\mathbf{x}_{\ell})\right)\right),
\end{equation}
where $\mathds{1}_{\mathcal{A}}(\cdot)$ is the indicator function of set $\mathcal{A}$.

Based on the computed values in \eqref{aIBF_RE_LS}, \eqref{mIBF_RE_LS}, and \eqref{epIBF_RE_LS}, one prefer the random effects model to the location-scale model if $aB_{M_{RE}M_{LS}}^I>0$, $mB_{M_{RE}M_{LS}}^I>0$, or $epB_{M_{RE}M_{LS}}^I>0.5$ depending on the selected criterium. Otherwise, the location-scale model is preferable.

\section{Simulation study}\label{sec:sim}
In this section we investigate the performance of the Bayesian model selection based on the intrinsic Bayes factors described in Section \ref{sec:IBF-LSM-REM} by drawing samples from the location-scale model and the random effects model, respectively. By doing so we will check whether the model selection based on the considered IBFs leads to the model from which the data were generated and study the impact of the sample size on the decision.

\begin{figure}[h!t]
\centering
\begin{tabular}{cc}
\includegraphics[width=8cm]{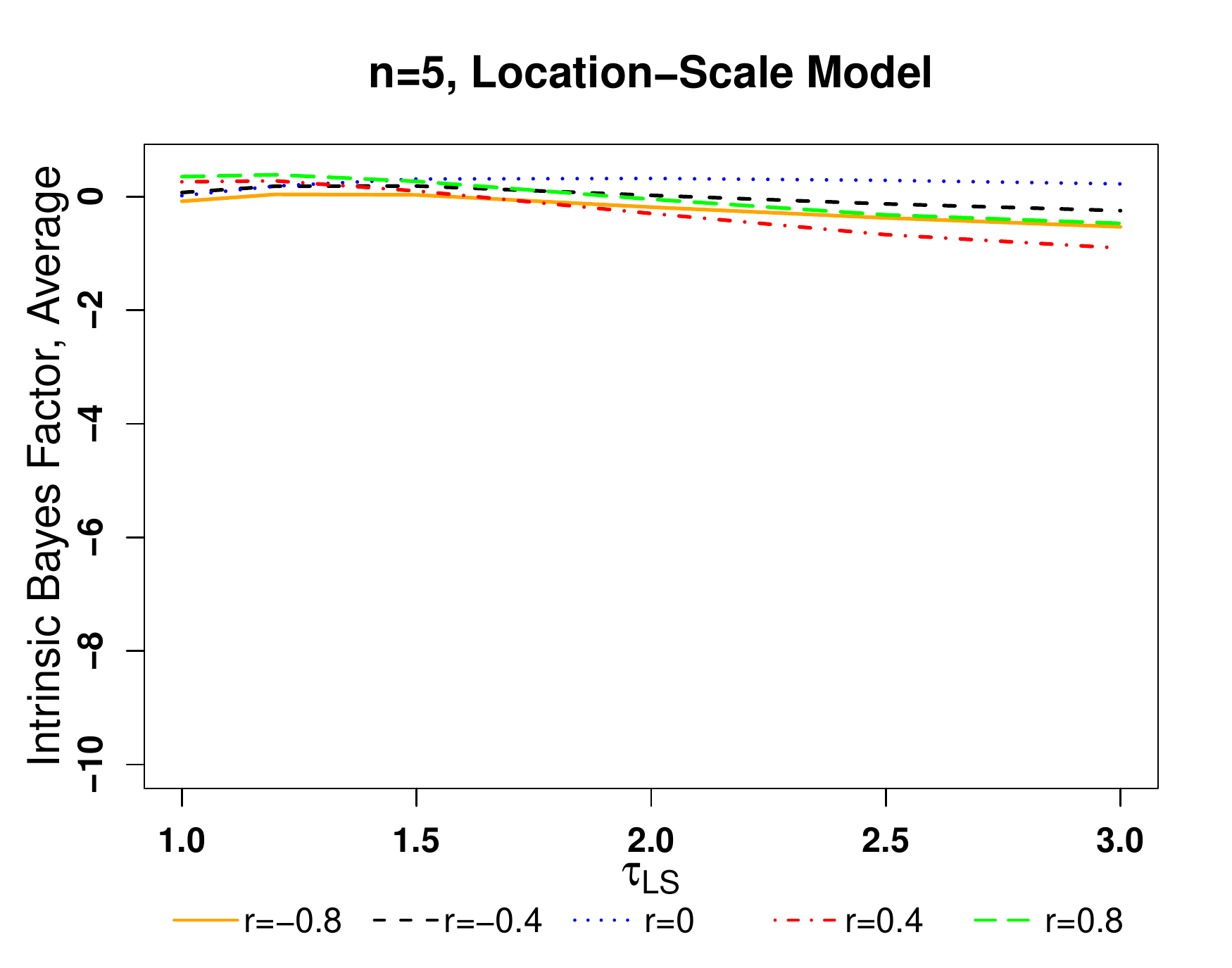}&\includegraphics[width=8cm]{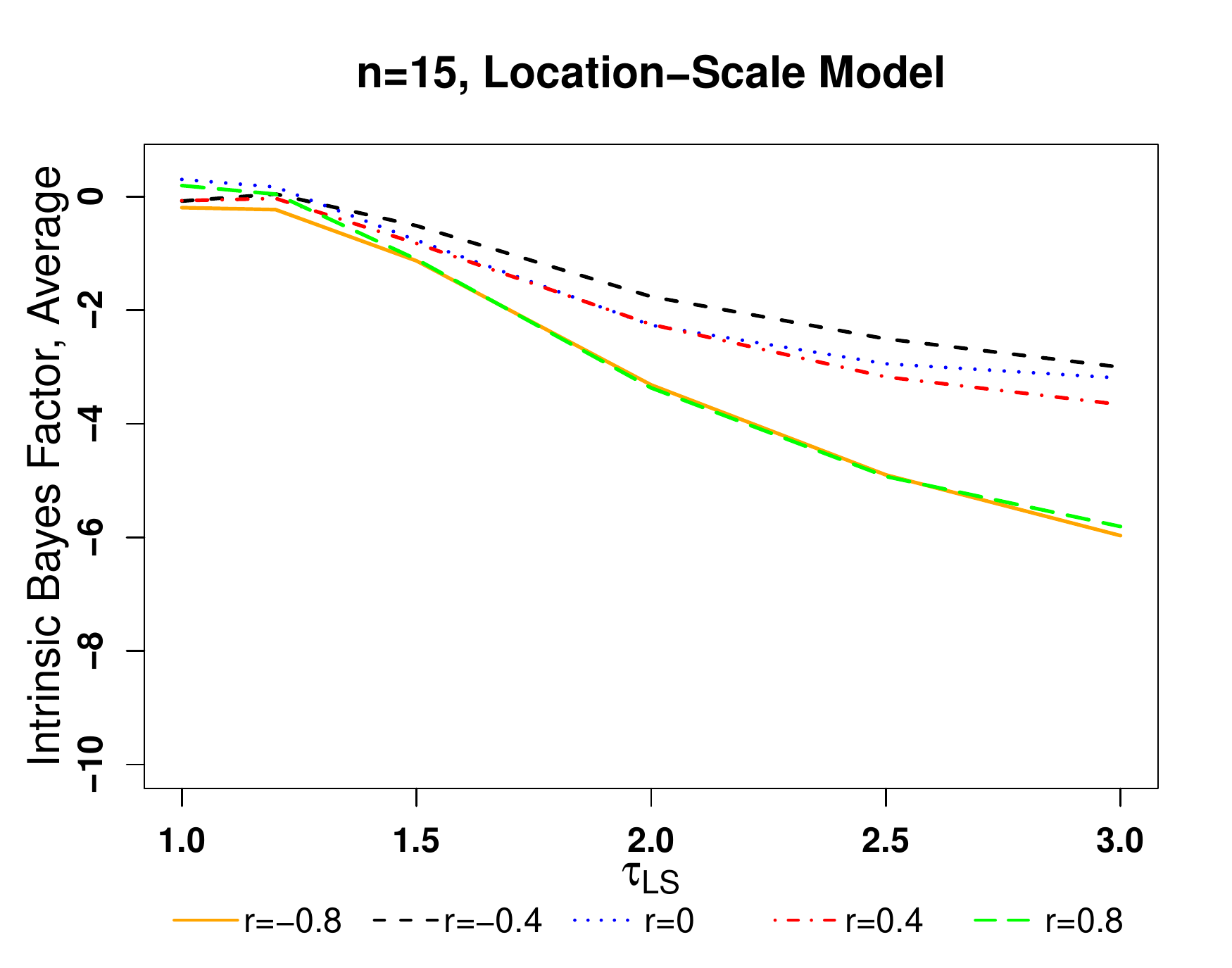}\\
\includegraphics[width=8cm]{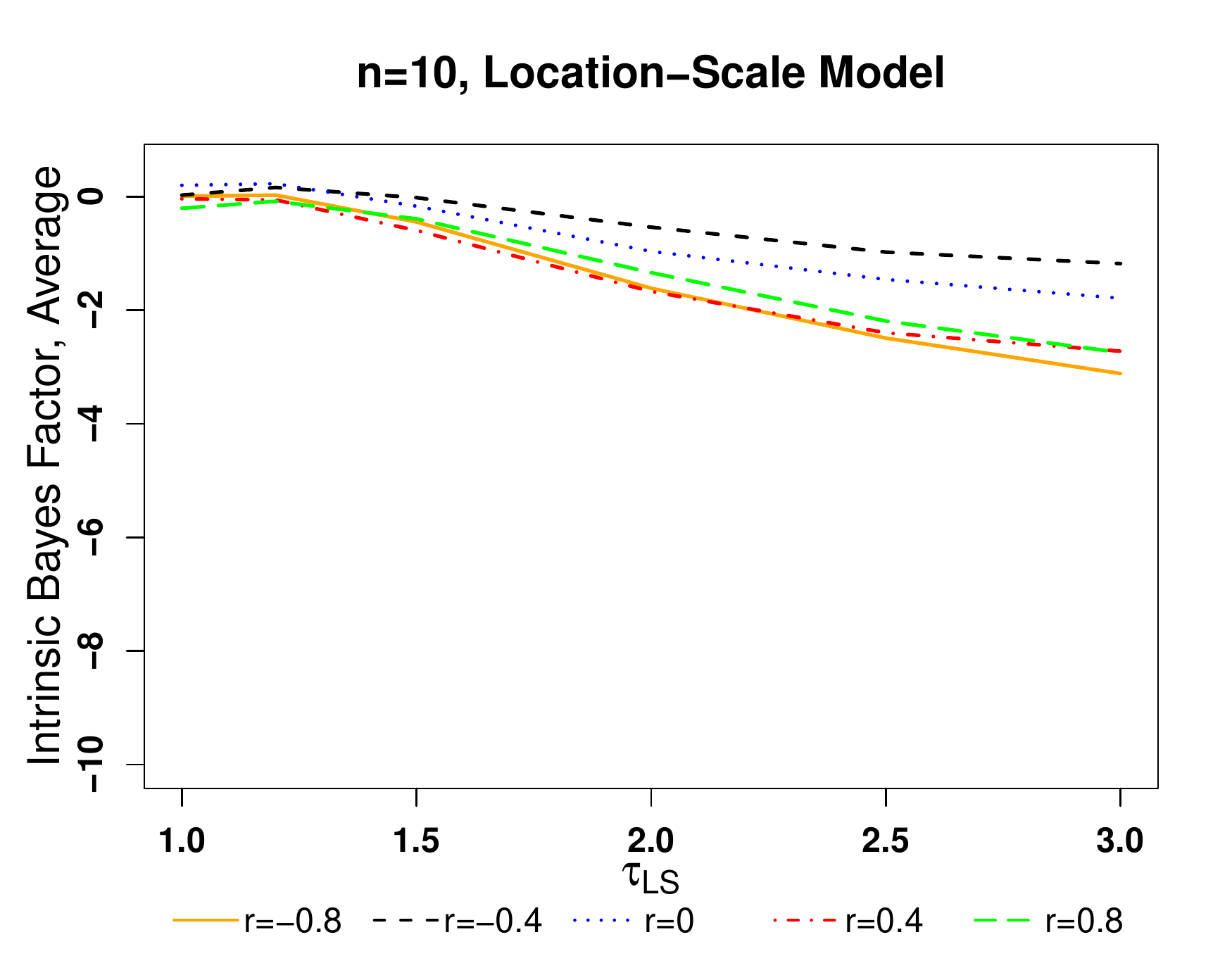}&\includegraphics[width=8cm]{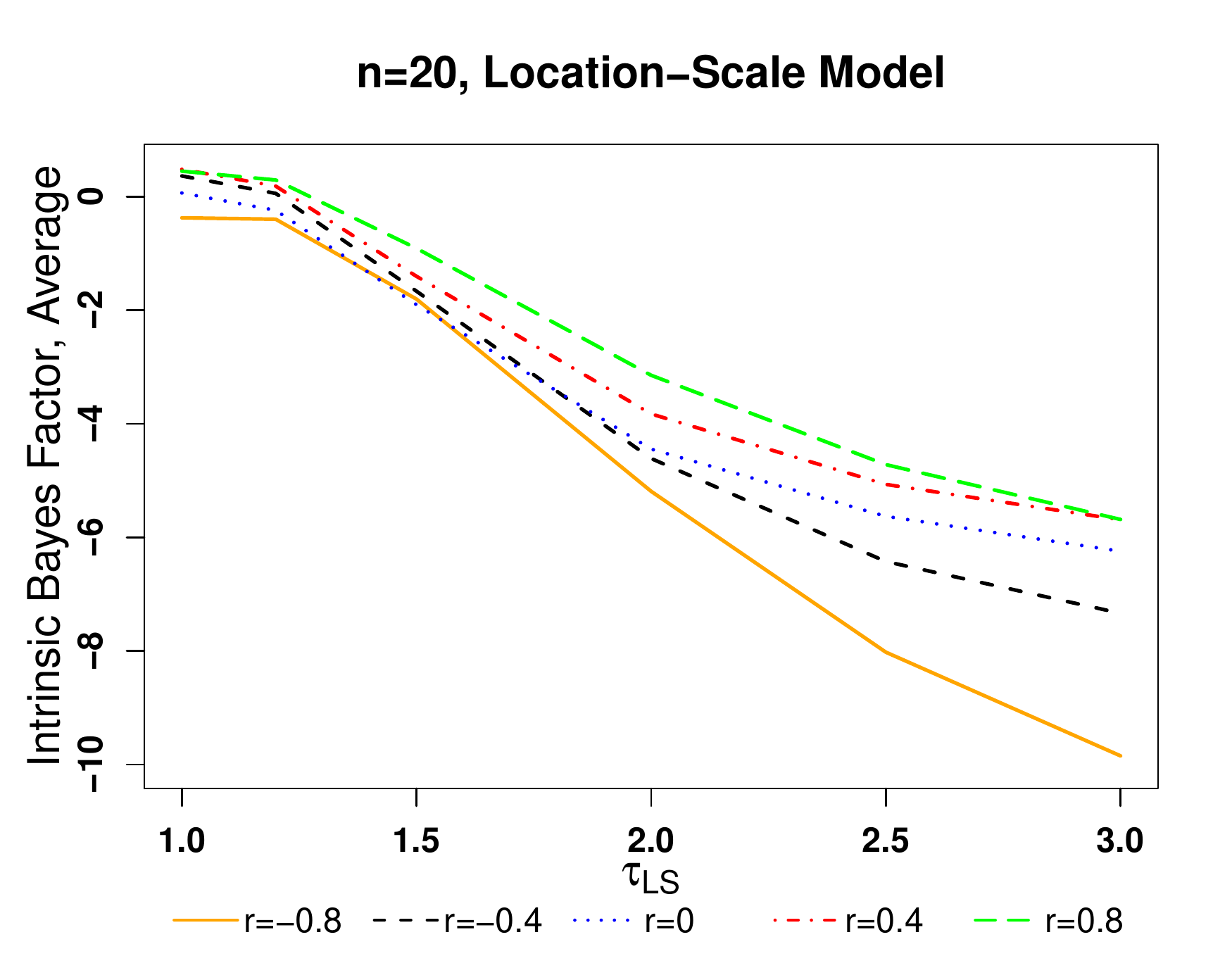}\\
\end{tabular}
 \caption{Average intrinsic Bayes factor \eqref{aIBF_RE_LS} for comparing the random effects model to the location-scale model as a function of $\tau_{LS}$ when the reference prior is employed. We set $n\in\{5,10,15,20\}$ and $r\in \{-0.8,-0.4,0,0.4,0.8\}$. The data were drawn from the location-scale model (see, Scenario 1).}
\label{fig:aIBF_LS}
 \end{figure}

\begin{figure}[h!t]
\centering
\begin{tabular}{cc}
\includegraphics[width=8cm]{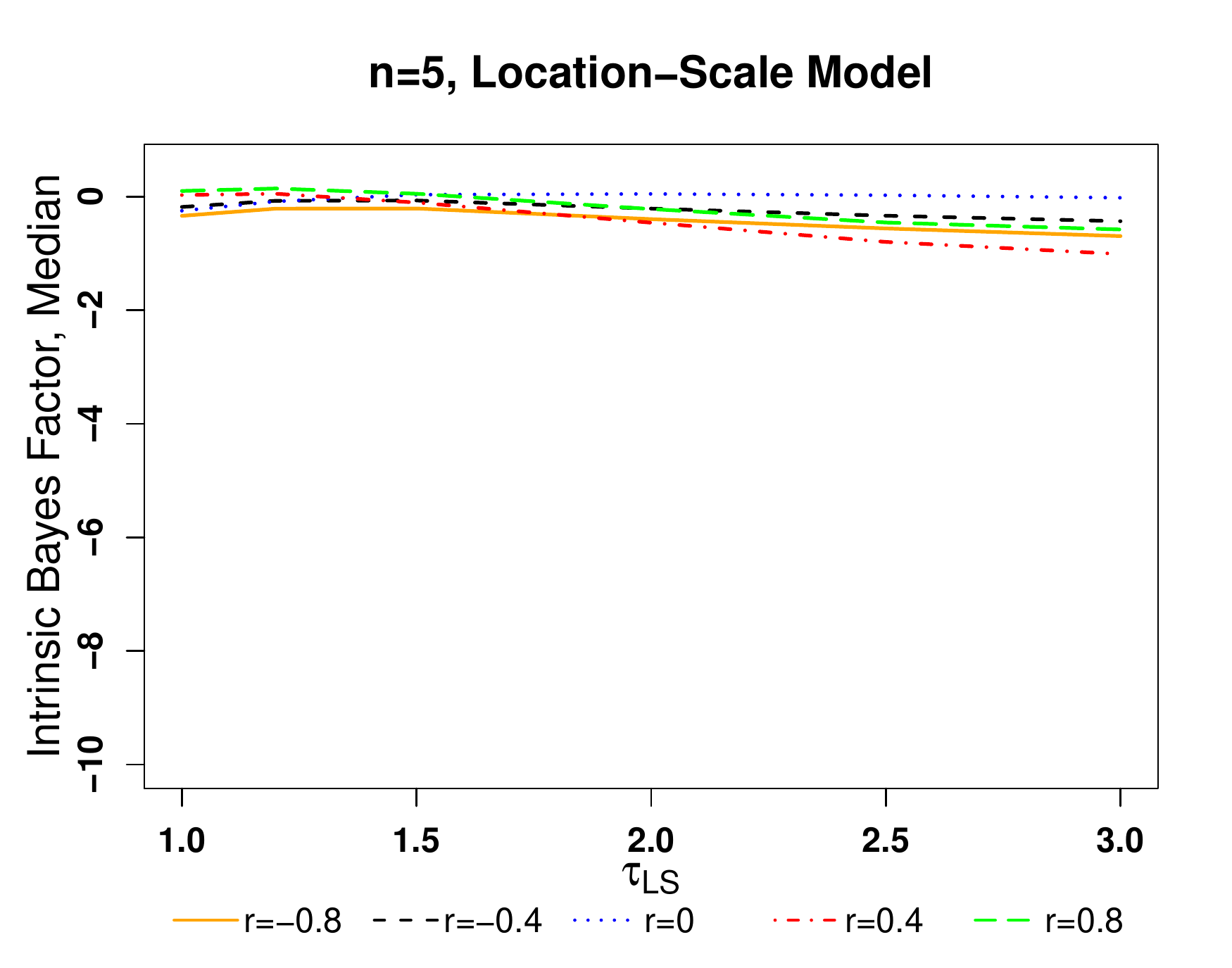}&\includegraphics[width=8cm]{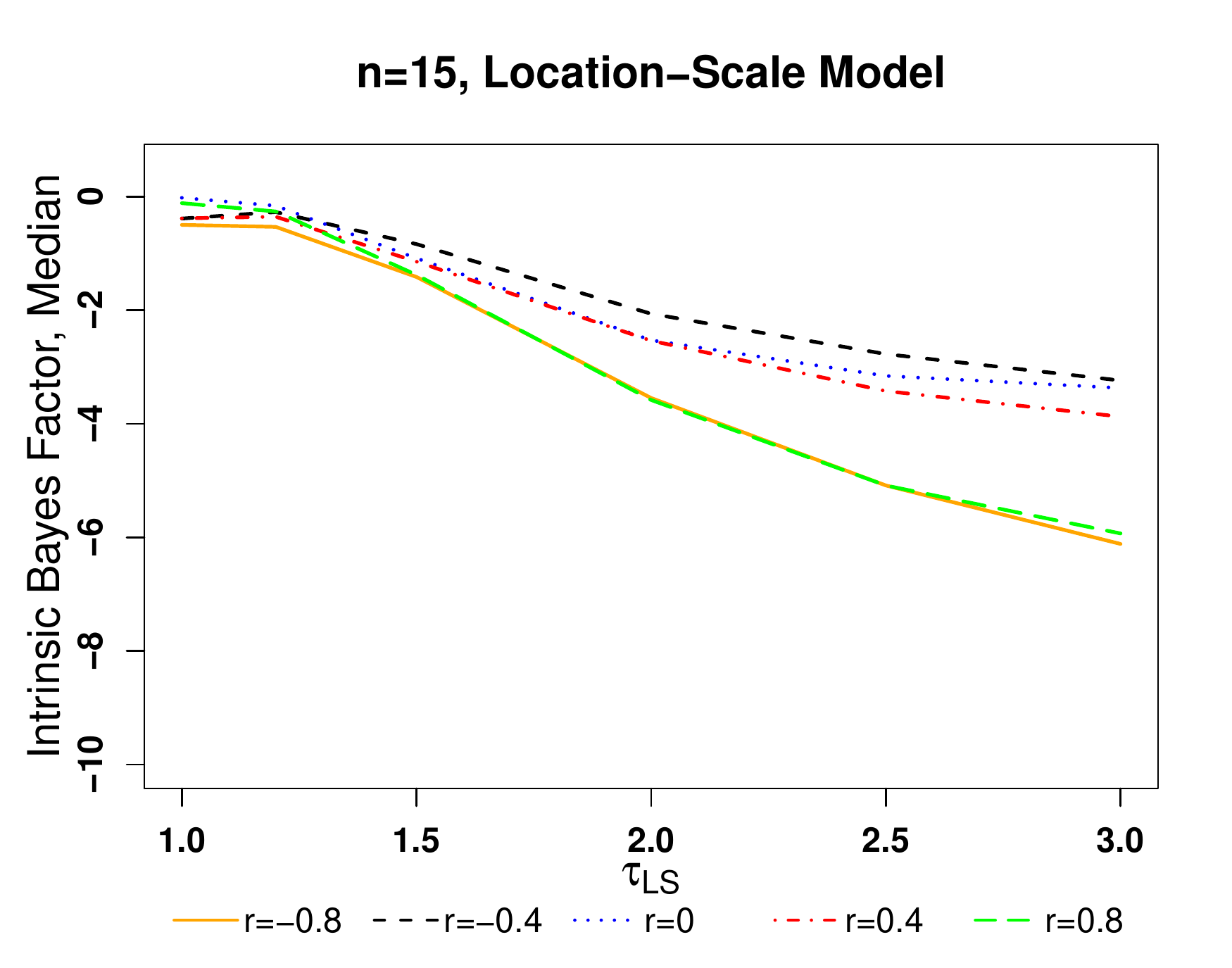}\\
\includegraphics[width=8cm]{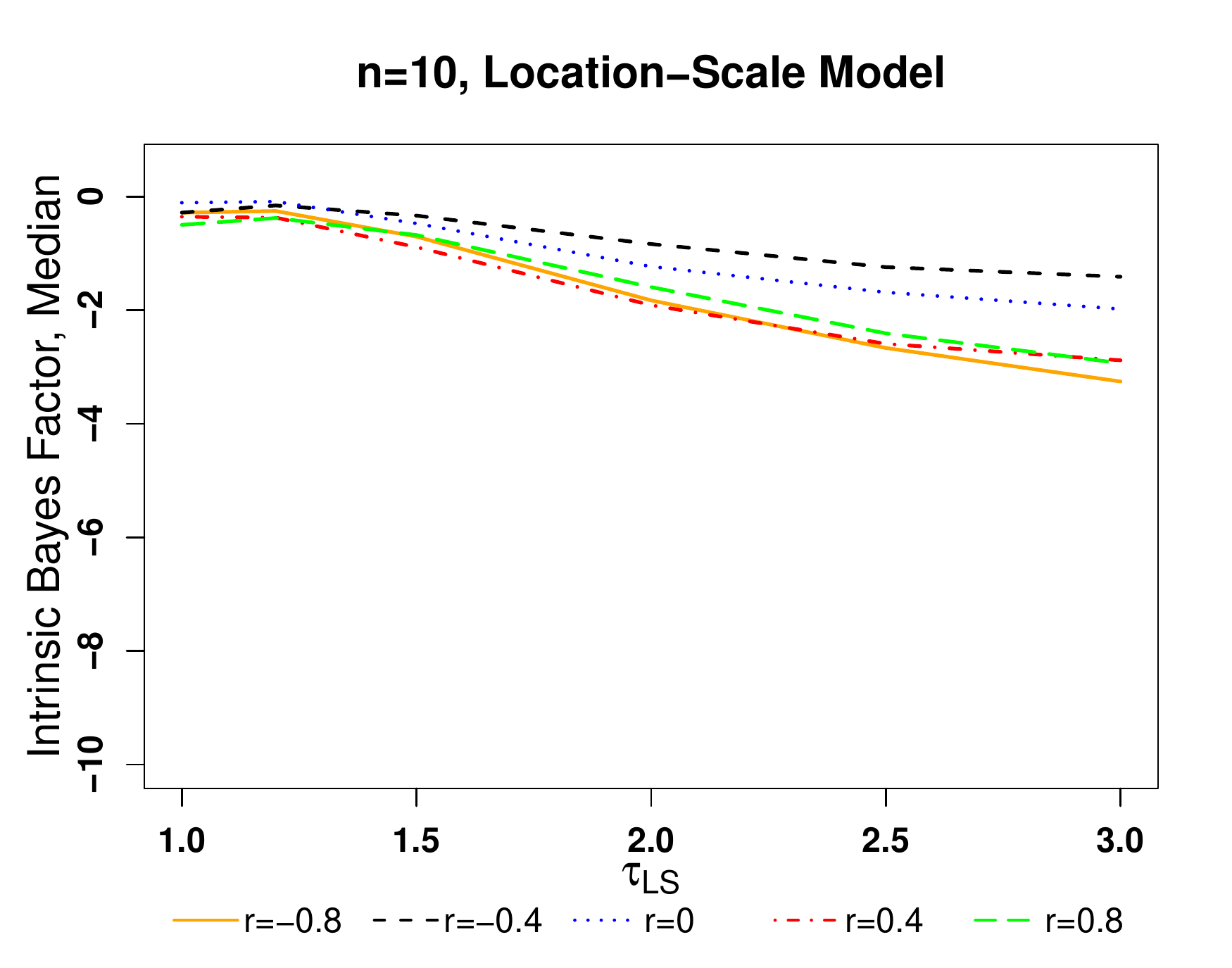}&\includegraphics[width=8cm]{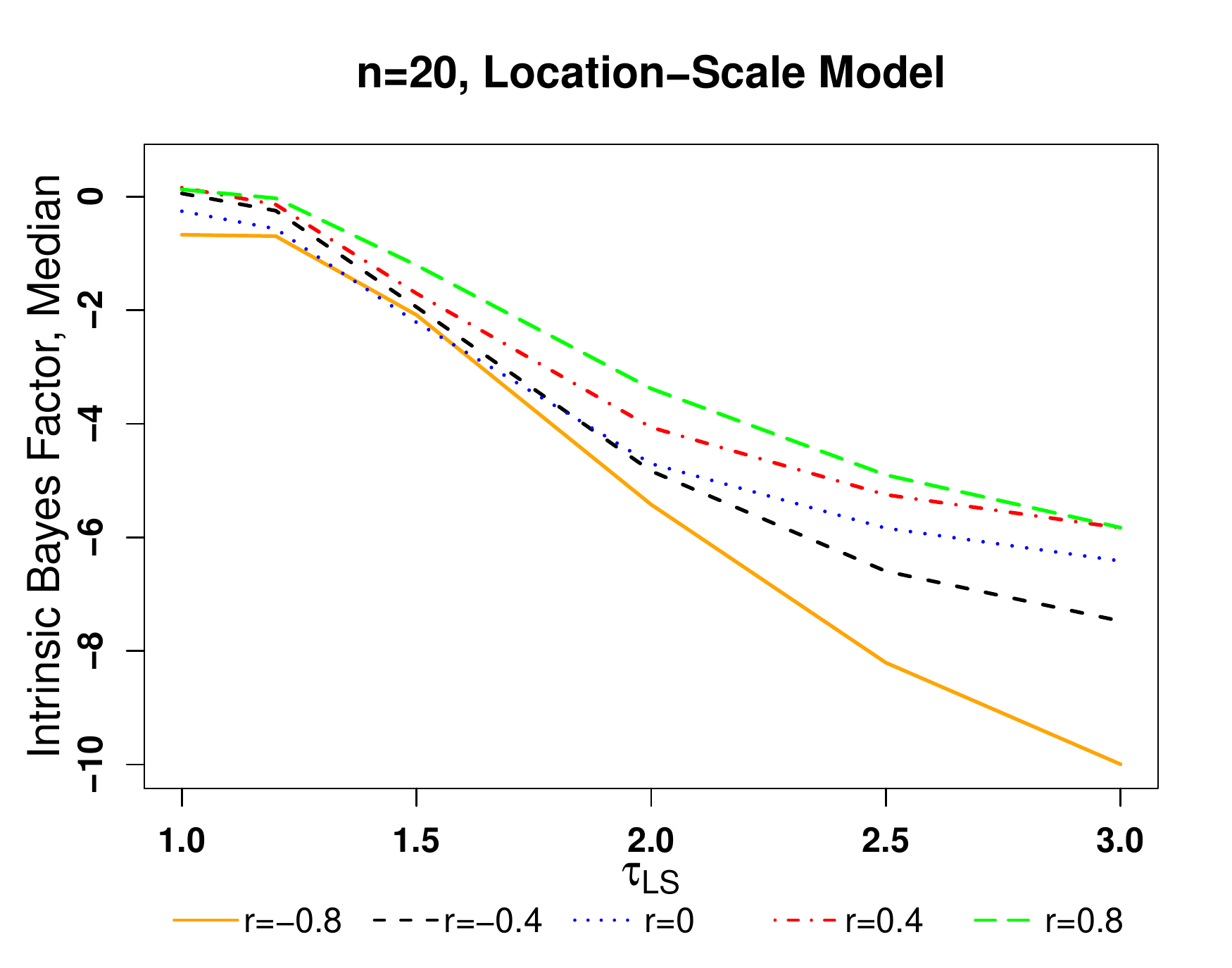}\\
\end{tabular}
 \caption{Median intrinsic Bayes factor \eqref{mIBF_RE_LS} for comparing the random effects model to the location-scale model as a function of $\tau_{LS}$ when the reference prior is employed. We set $n\in\{5,10,15,20\}$ and $r \in \{-0.8,-0.4,0,0.4,0.8\}$. The data were drawn from the location-scale model (see, Scenario 1).}
\label{fig:mIBF_LS}
 \end{figure}

\begin{figure}[h!t]
\centering
\begin{tabular}{cc}
\includegraphics[width=8cm]{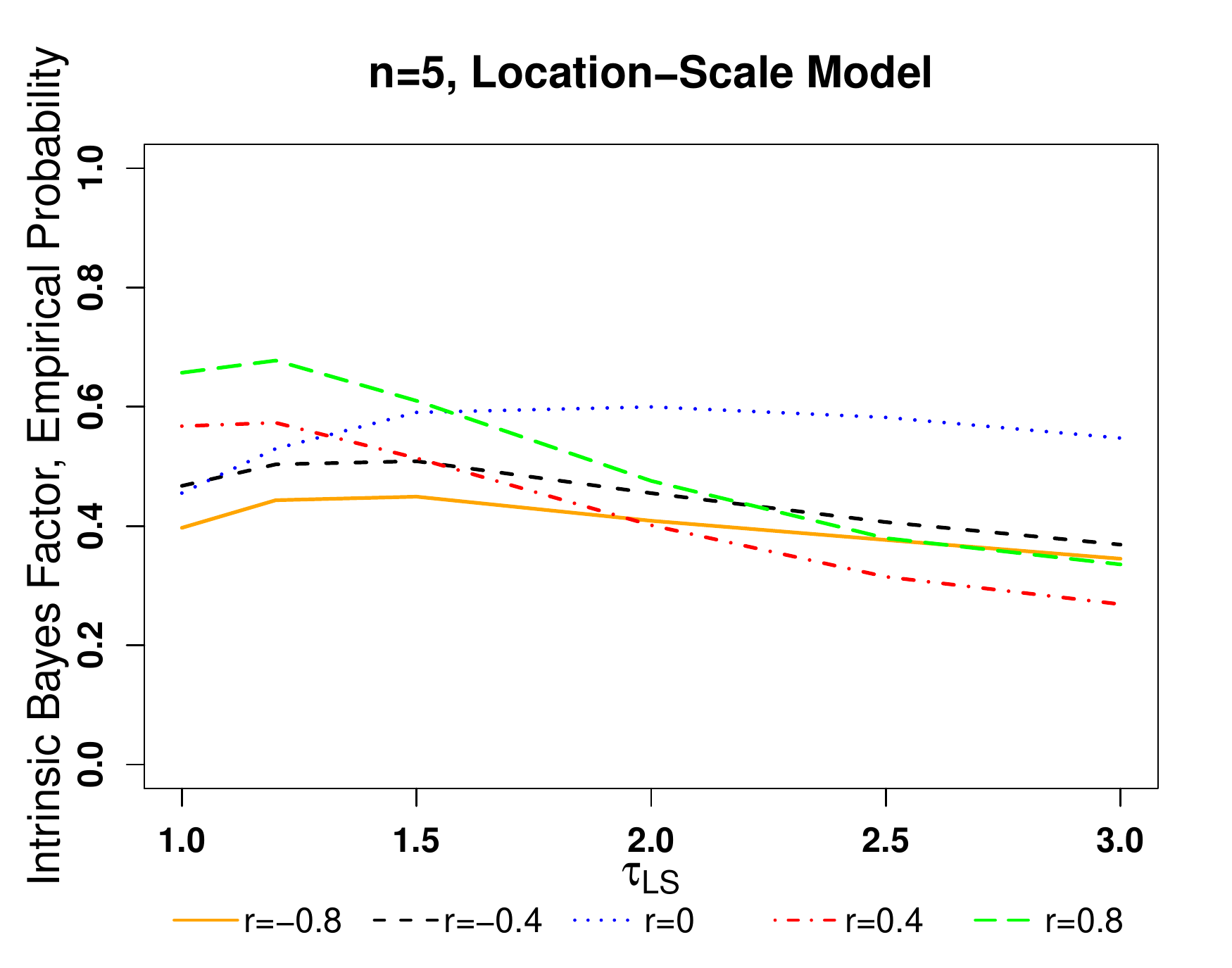}&\includegraphics[width=8cm]{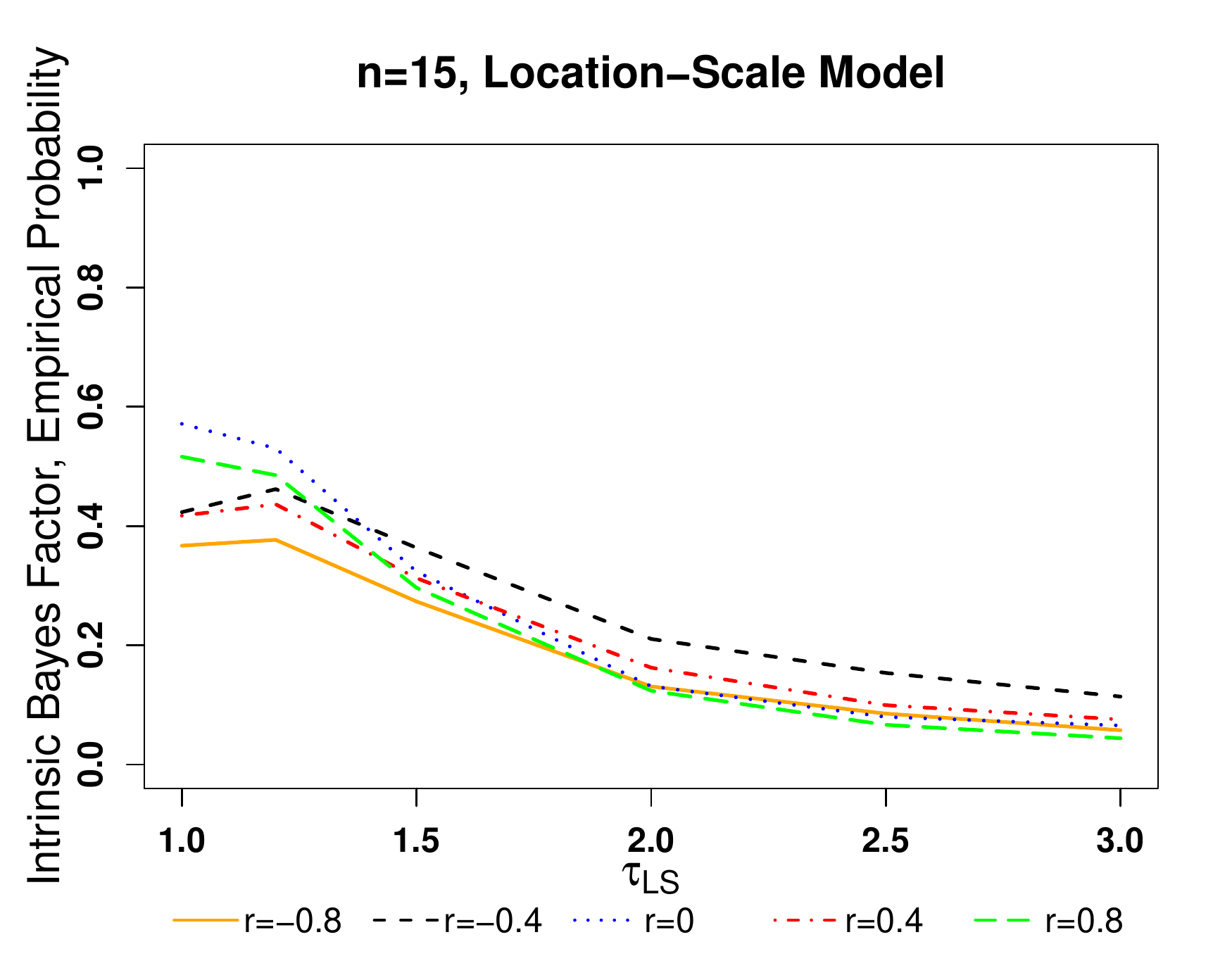}\\
\includegraphics[width=8cm]{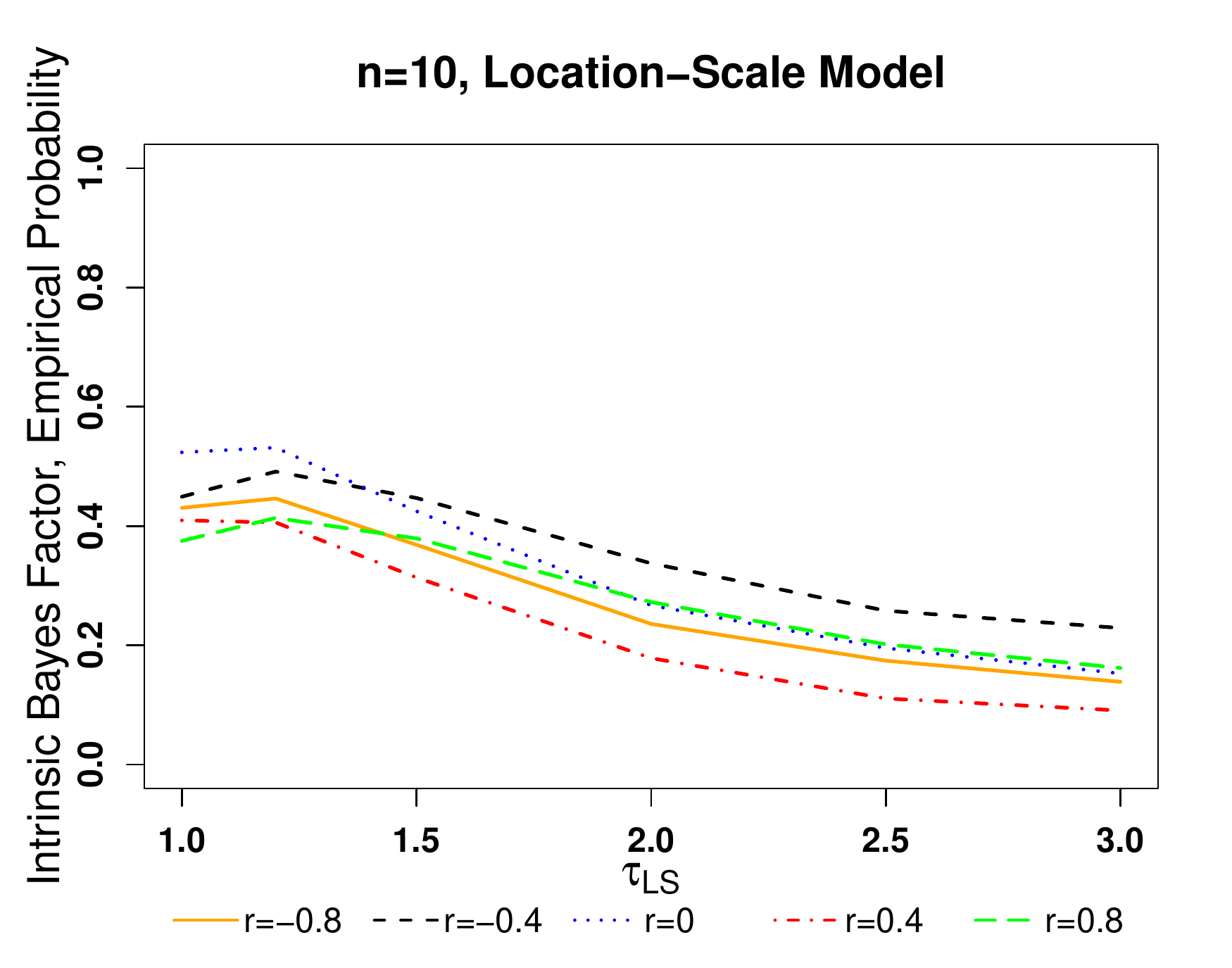}&\includegraphics[width=8cm]{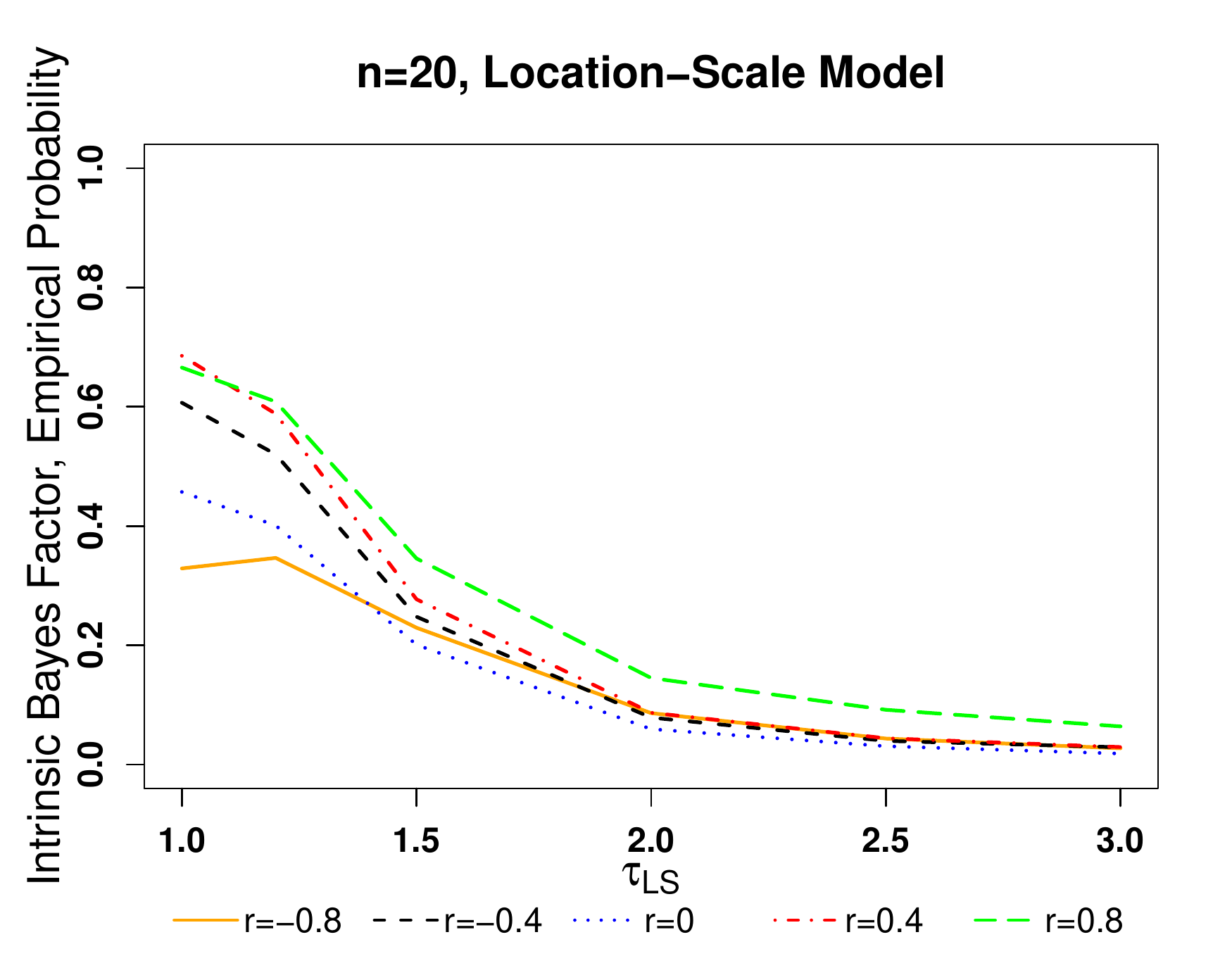}\\
\end{tabular}
 \caption{Empirical probability intrinsic Bayes factor \eqref{epIBF_RE_LS} for comparing the random effects model to the location-scale model as a function of $\tau_{LS}$ when the reference prior is employed. We set $n\in\{5,10,15,20\}$ and $r \in \{-0.8,-0.4,0,0.4,0.8\}$. The data were drawn from the location-scale model (see, Scenario 1).}
\label{fig:epIBF_LS}
 \end{figure}

\begin{figure}[h!t]
\centering
\begin{tabular}{cc}
\includegraphics[width=8cm]{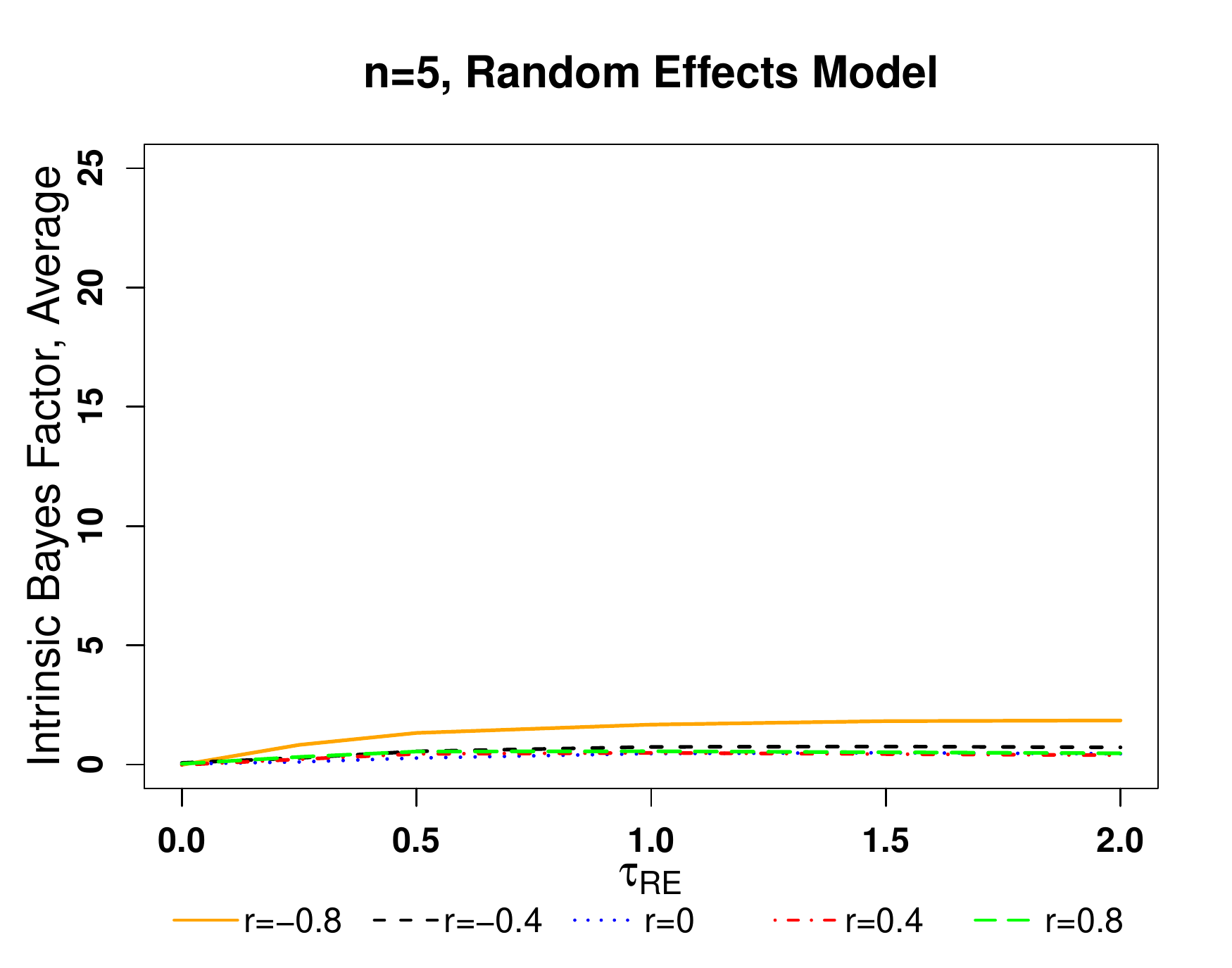}&\includegraphics[width=8cm]{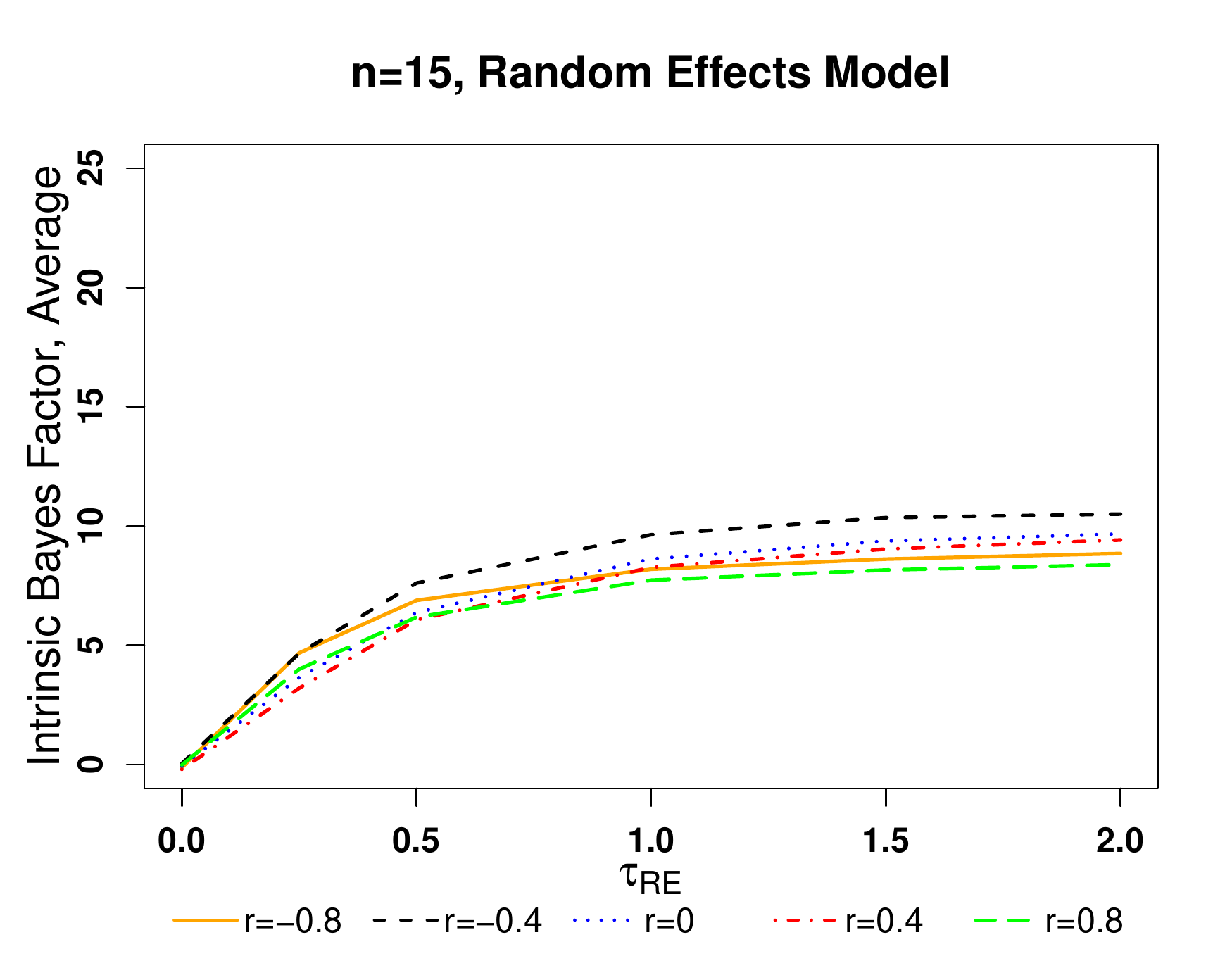}\\
\includegraphics[width=8cm]{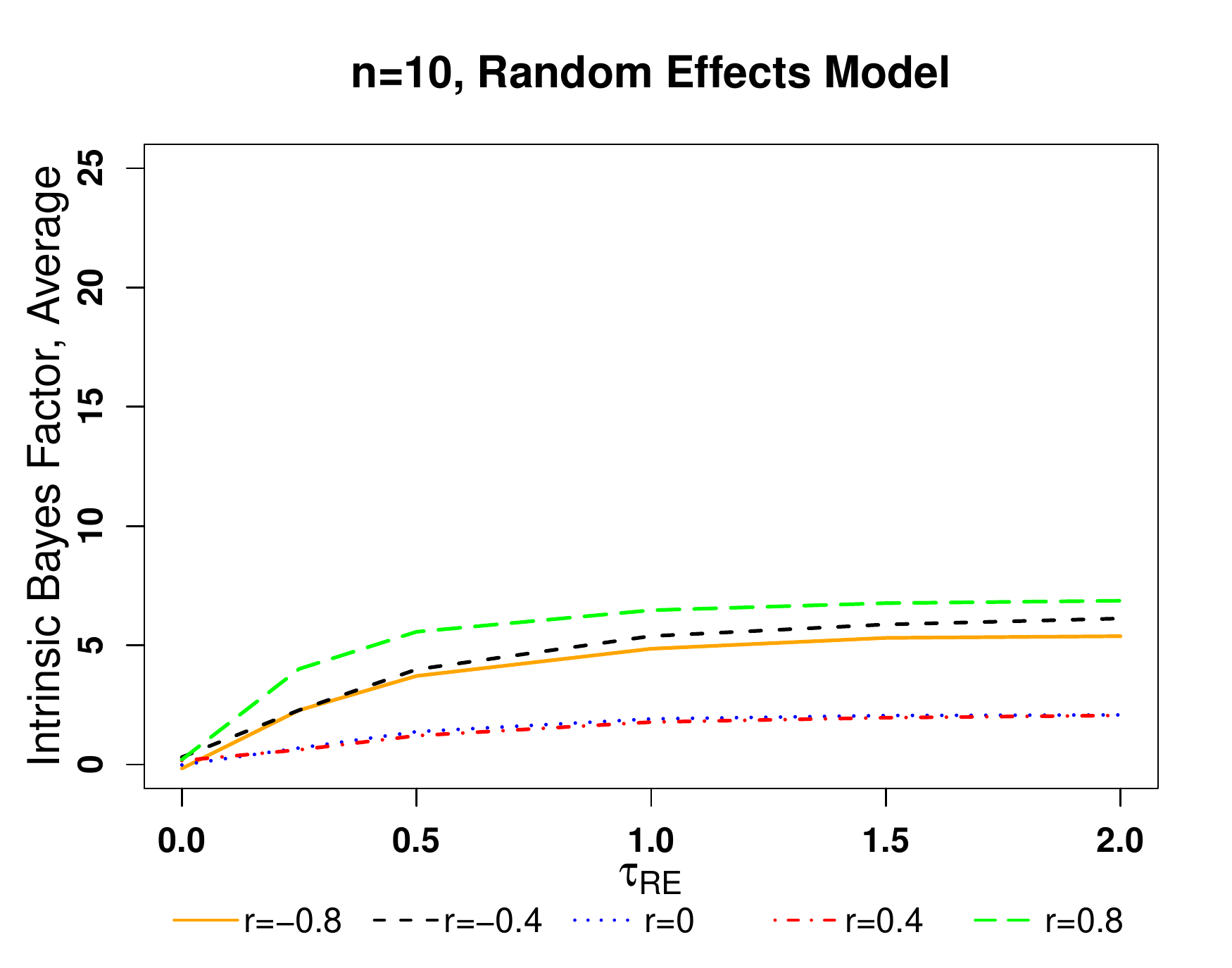}&\includegraphics[width=8cm]{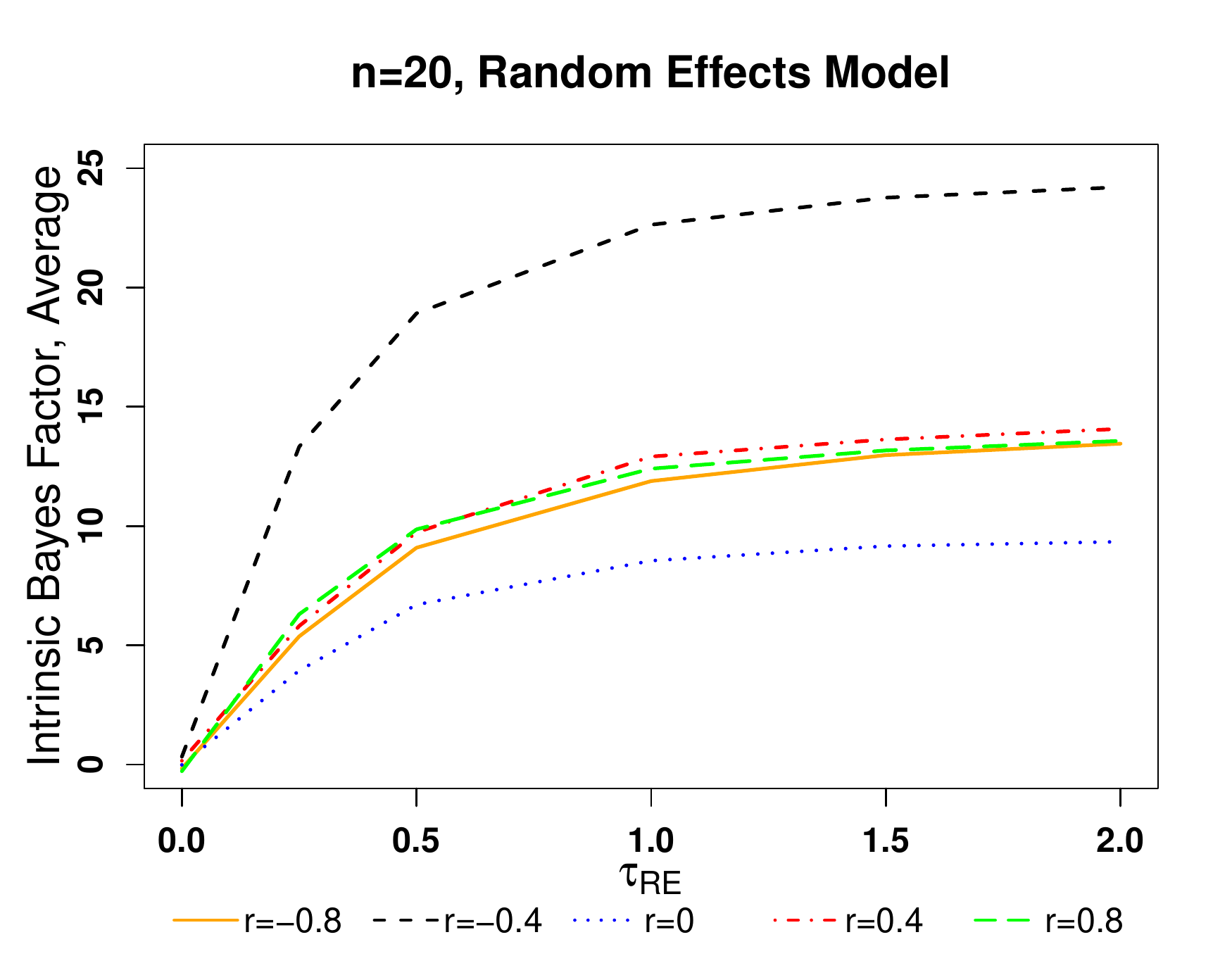}\\
\end{tabular}
 \caption{Average intrinsic Bayes factor \eqref{aIBF_RE_LS} for comparing the random effects model to the location-scale model as a function of $\tau_{RE}$ when the reference prior is employed. We set $n\in\{5,10,15,20\}$ and $r \in \{-0.8,-0.4,0,0.4,0.8\}$. The data were drawn from the random effects model (see, Scenario 2).}
\label{fig:aIBF_RE}
 \end{figure}

\begin{figure}[h!t]
\centering
\begin{tabular}{cc}
\includegraphics[width=8cm]{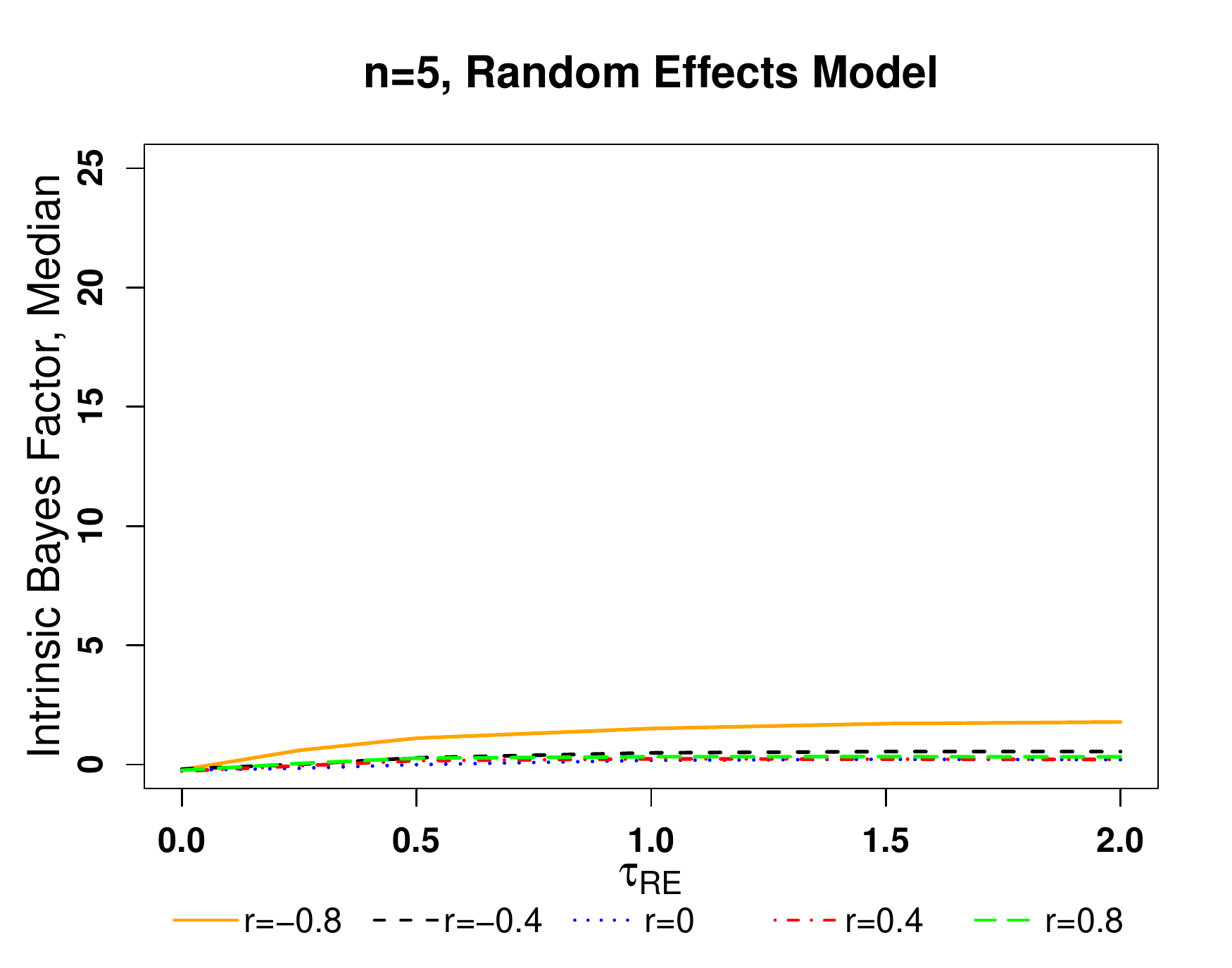}&\includegraphics[width=8cm]{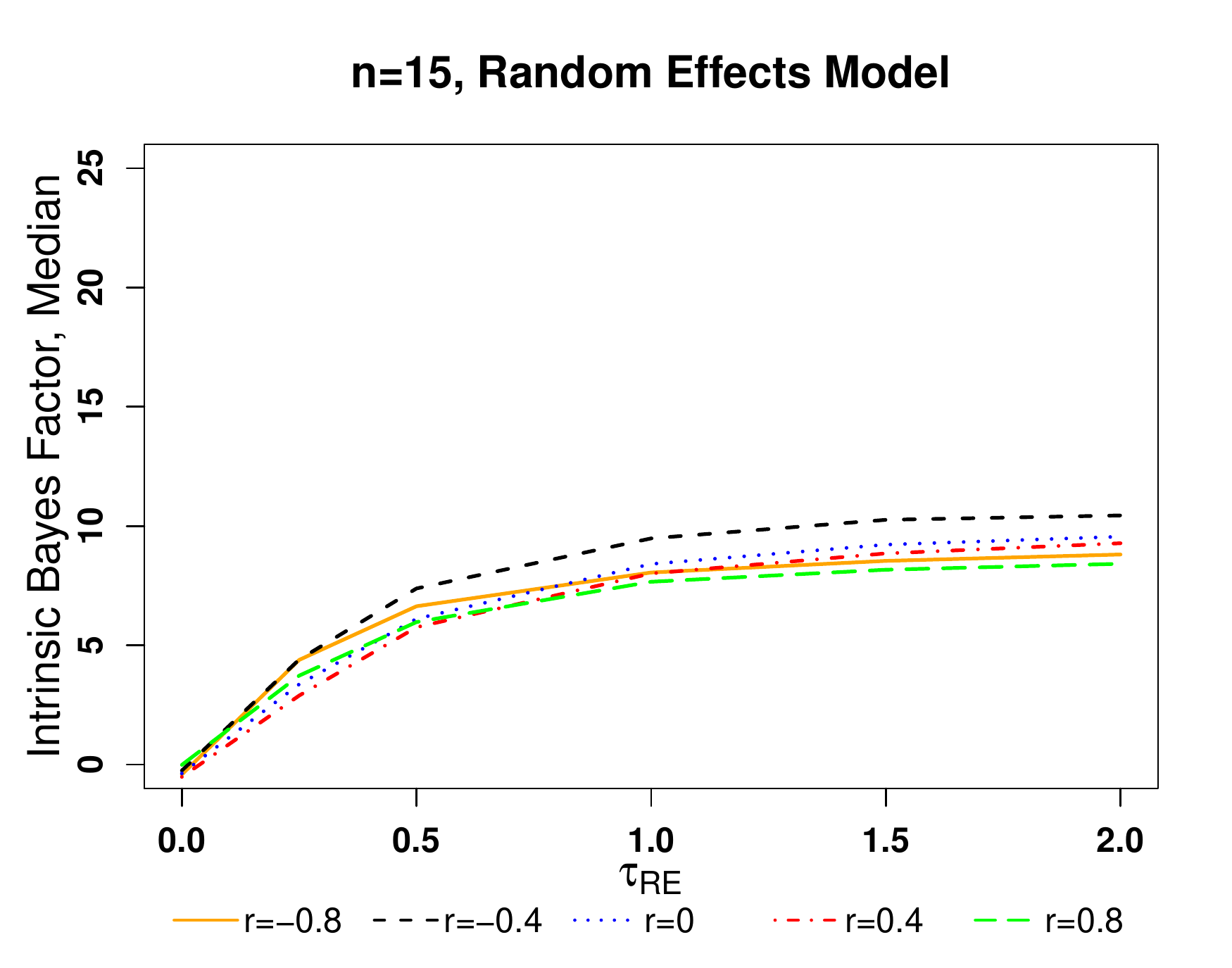}\\
\includegraphics[width=8cm]{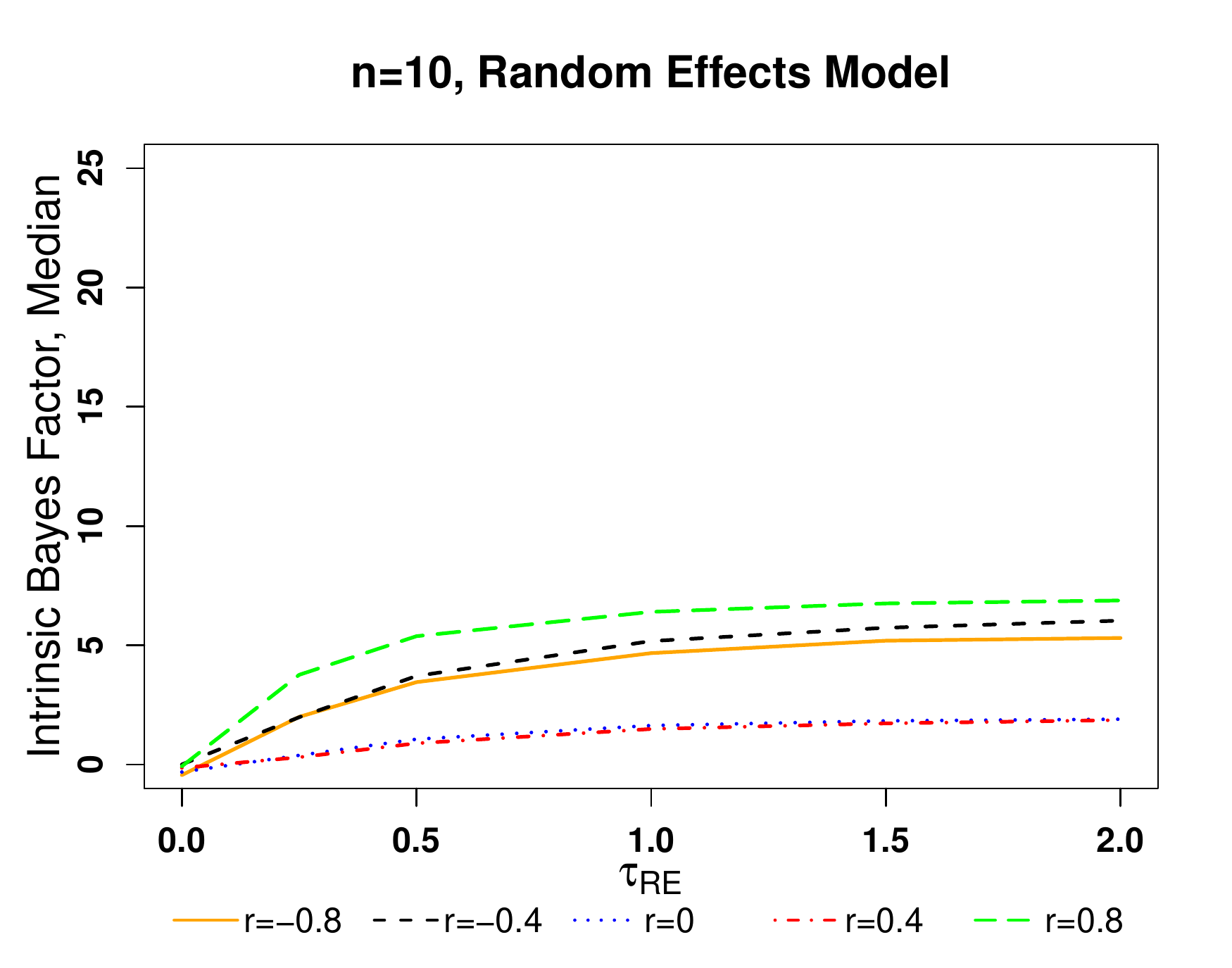}&\includegraphics[width=8cm]{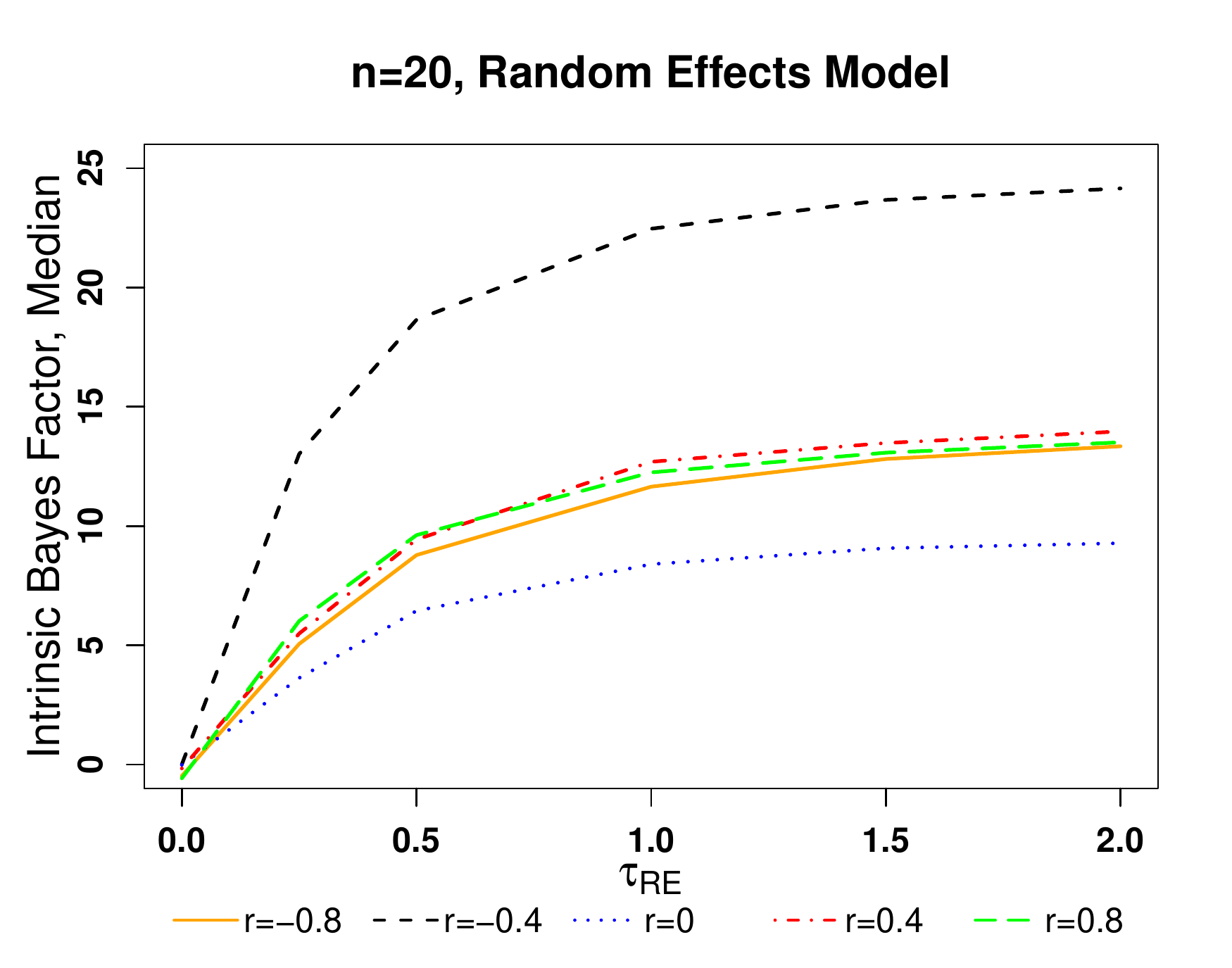}\\
\end{tabular}
 \caption{Median intrinsic Bayes factor \eqref{mIBF_RE_LS} for comparing the random effects model to the location-scale model as a function of $\tau_{RE}$ when the reference prior is employed. We set $n\in\{5,10,15,20\}$ and $r \in \{-0.8,-0.4,0,0.4,0.8\}$. The data were drawn from the random effects model (see, Scenario 2).}
\label{fig:mIBF_RE}
 \end{figure}

\begin{figure}[h!t]
\centering
\begin{tabular}{cc}
\includegraphics[width=8cm]{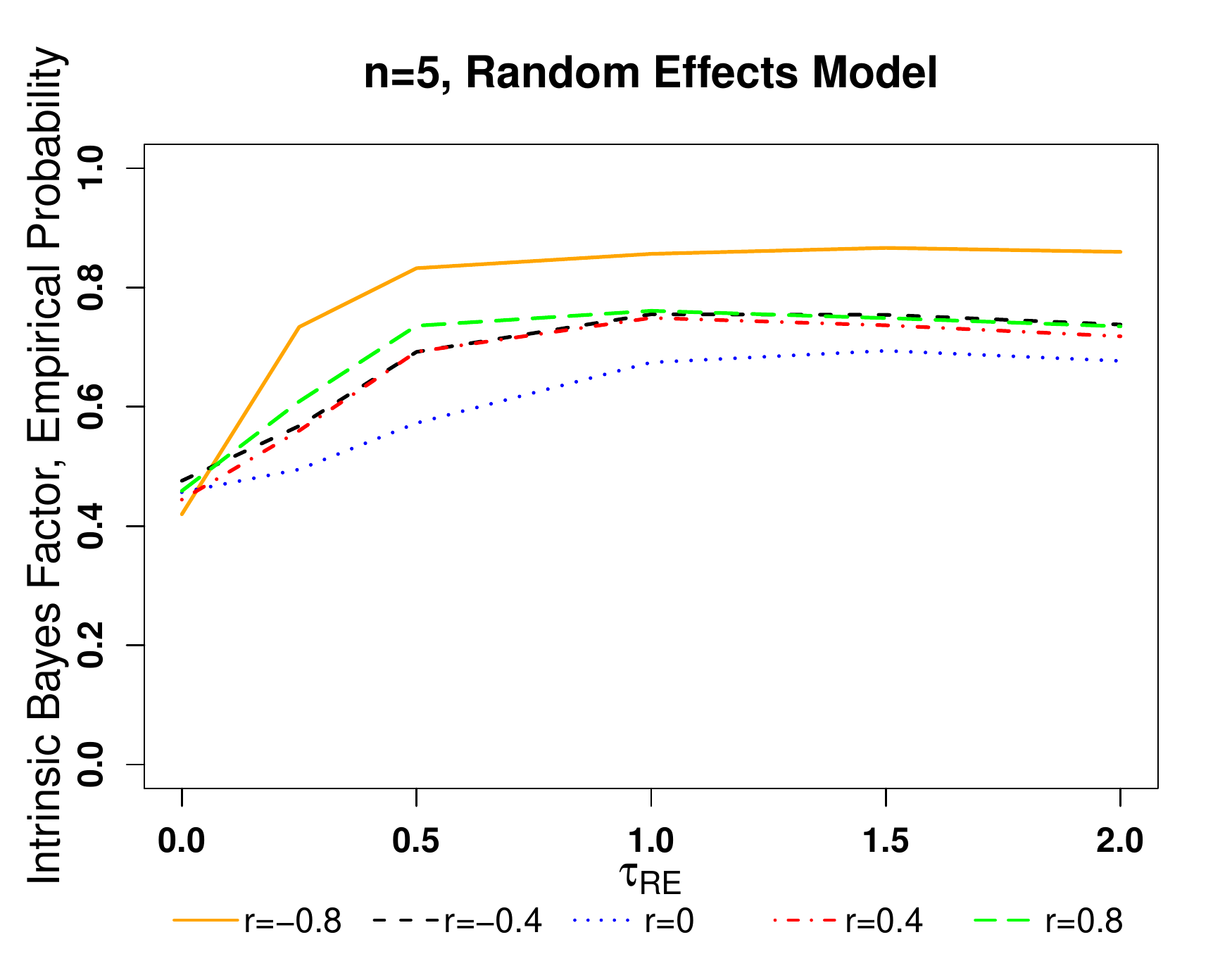}&\includegraphics[width=8cm]{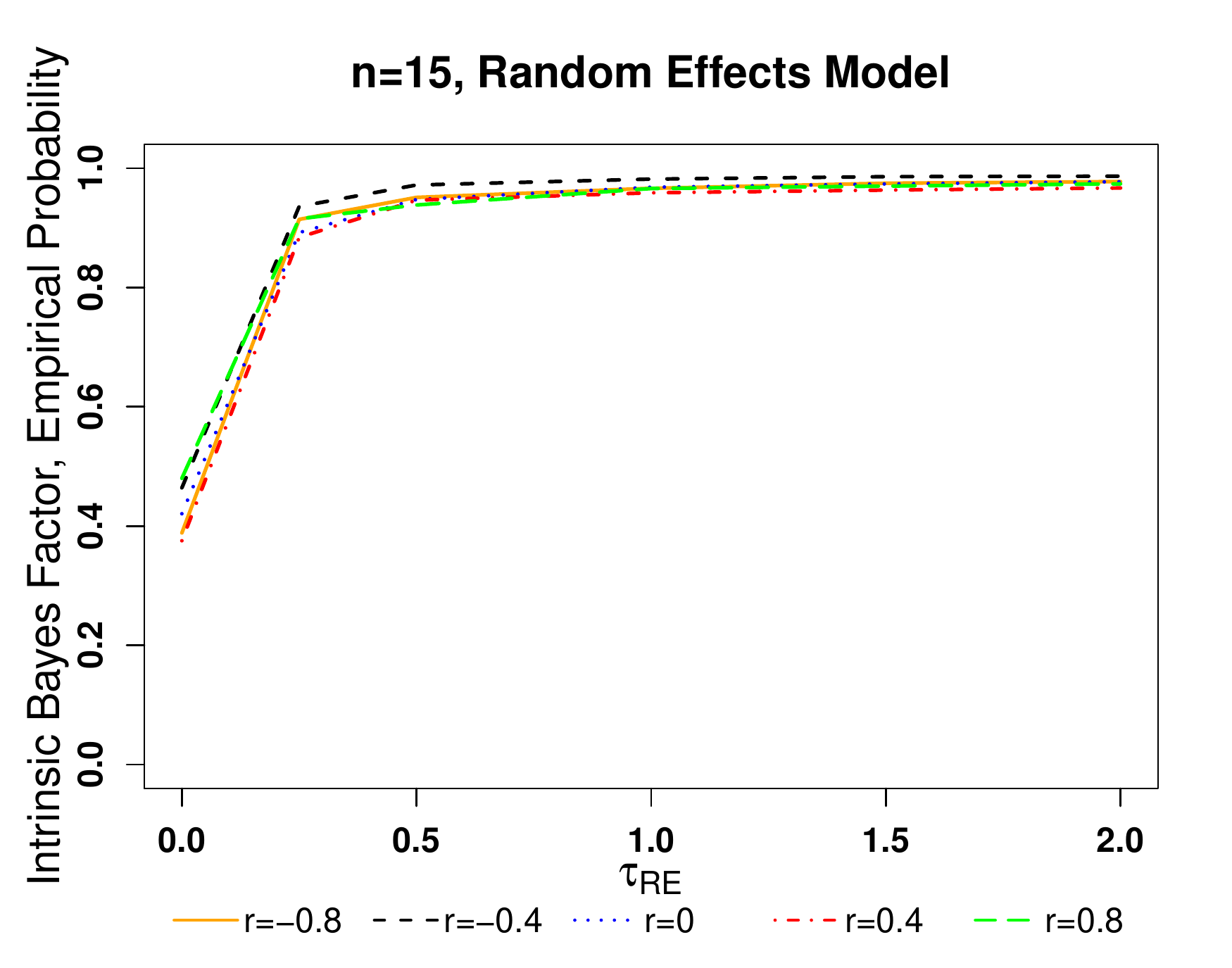}\\
\includegraphics[width=8cm]{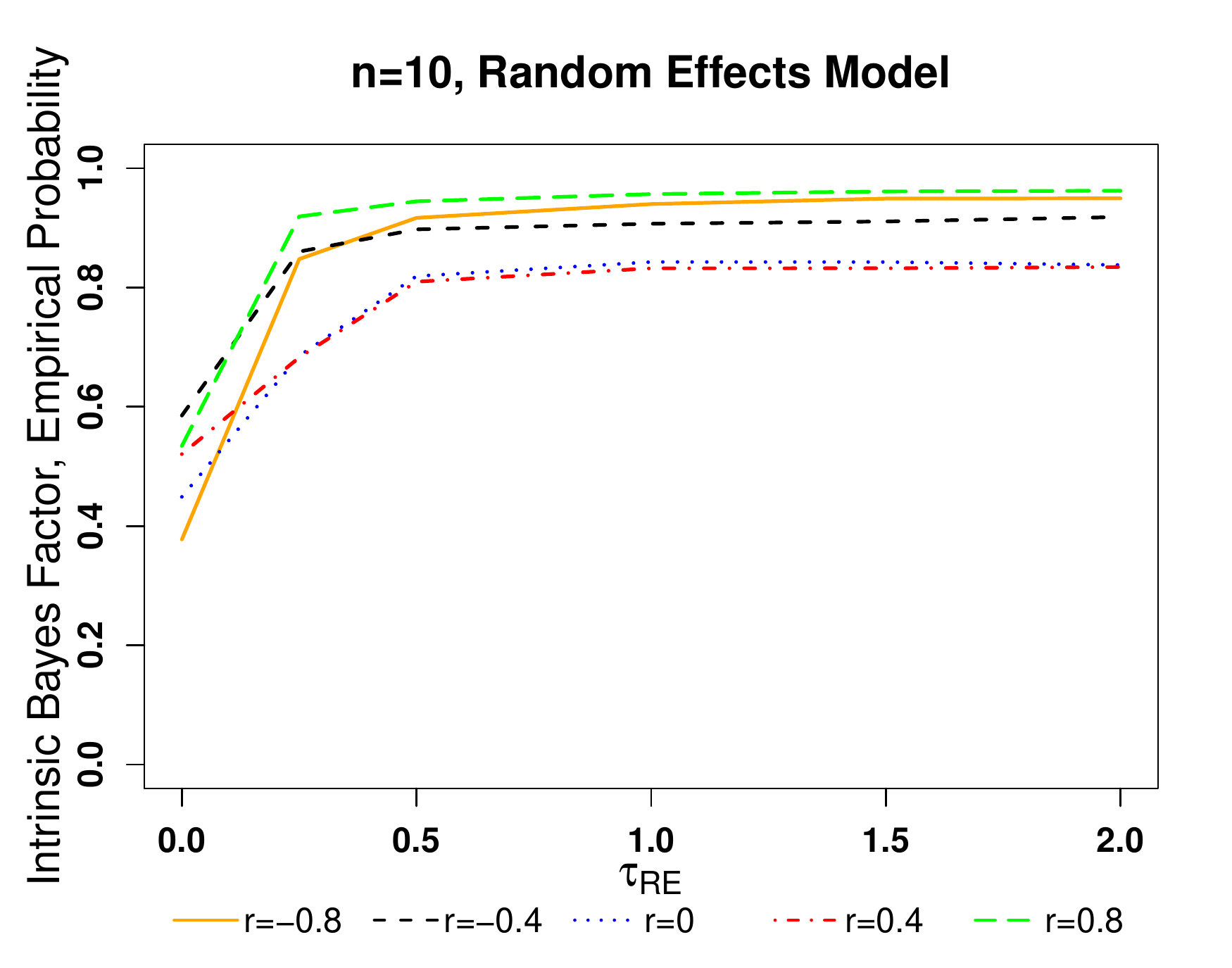}&\includegraphics[width=8cm]{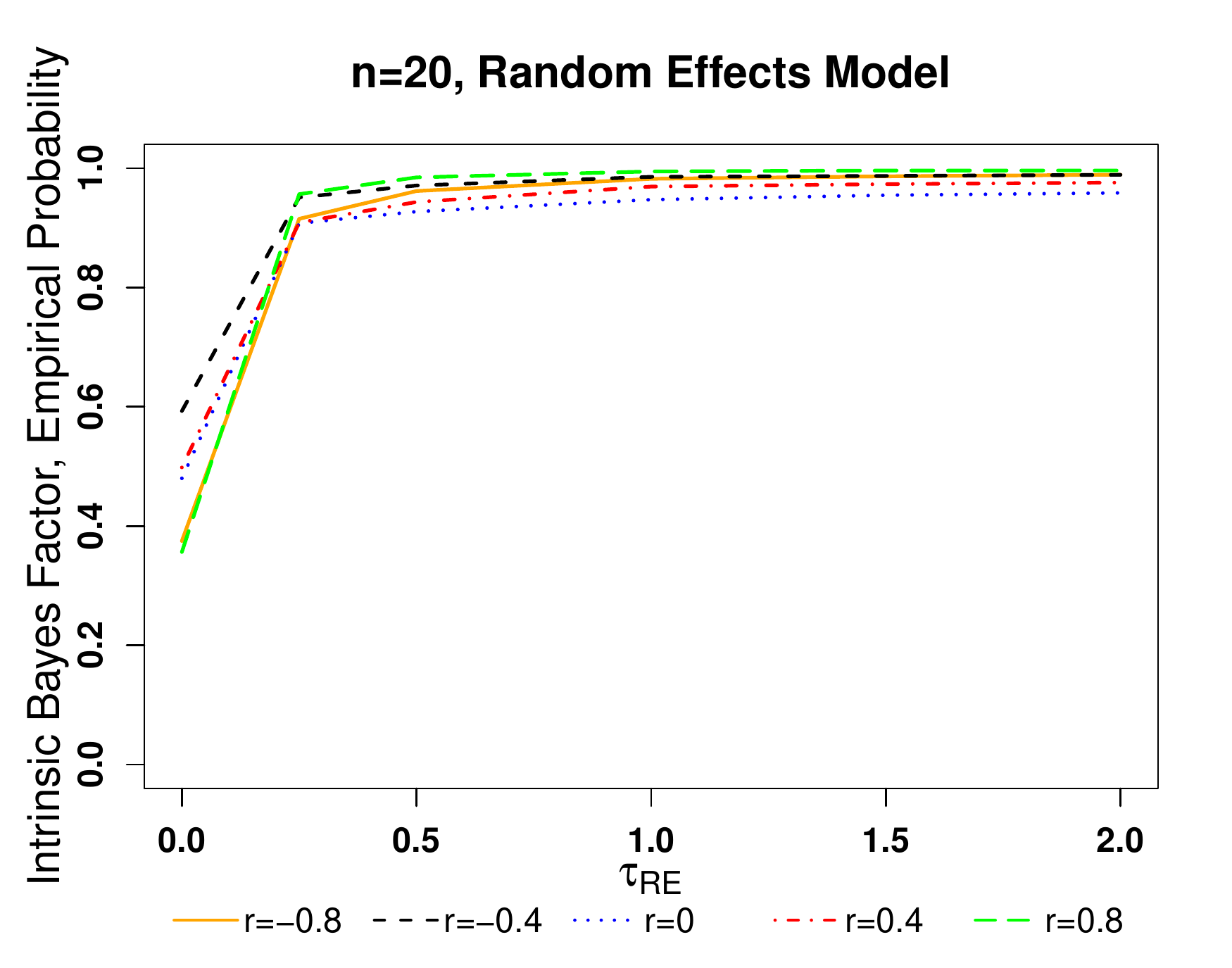}\\
\end{tabular}
 \caption{Empirical probability intrinsic Bayes factor \eqref{epIBF_RE_LS} for comparing the random effects model to the location-scale model as a function of $\tau_{RE}$ when the reference prior is employed. We set $n\in\{5,10,15,20\}$ and $r \in \{-0.8,-0.4,0,0.4,0.8\}$. The data were drawn from the random effects model (see, Scenario 2).}
\label{fig:epIBF_RE}
 \end{figure}

The following two simulation scenarios are considered:
\begin{itemize}
\item \textbf{Scenario 1}: to draw a sample from the location-scale model $\mathbf{x} = \mu\mathbf{1}+\boldsymbol{\varepsilon}_{LS}$ where $\boldsymbol{\varepsilon}_{LS} \sim N(\mathbf{0},\tau_{LS}^2\mathbf{U})$;
\item \textbf{Scenario 2}: to draw a sample from the random effects model $\mathbf{x} = \mu\mathbf{1}+\boldsymbol{\lambda}_{RE}+\boldsymbol{\varepsilon}_{RE}$ where $\boldsymbol{\varepsilon}_{RE} \sim N(\mathbf{0},\mathbf{U})$ and $\boldsymbol{\lambda}_{RE} \sim N(\mathbf{0},\tau_{RE}^2\mathbf{I})$.
\end{itemize}

In both scenarios we set $\mu=0$ and consider several sample sizes $n \in \{5,10,15,20\}$. The square roots of the diagonal elements of the matrix $\mathbf{U}=(u_{ij})_{i,j=1,...,n}$ are drawn from the uniform distribution on $[0,1]$, while its nondiagonal elements of $\mathbf{U}$ are set to $u_{ij}=r^{|i-j|}\sqrt{u_{ii}}\sqrt{u_{jj}}$ for $i,j \in 1,...,n$ and $i \neq j$ with $r \in \{-0.8,-0.4,0,0.4,0.8\}$. Several values of $\tau_{LS}$ and $\tau_{RE}$ are considered, namely $\tau_{LS} \in \{1.0,1.2,1.5,2.0,2.5,3.0\}$ and $\tau_{RE} \in \{0.0,0.25,0.5,1.0,1.5,2.0\}$. The results in Figures \ref{fig:aIBF_LS} to \ref{fig:epIBF_RE} are based on $10000$ independent repetitions. For $\tau_{LS}=1$ and $\tau_{RE}=0$, the two models coincide and they both correspond to the case when the dark uncertainty is absent.

Figures \ref{fig:aIBF_LS} to \ref{fig:epIBF_LS} present the values of the average IBF, the median IBF, and the empirical probability IBF defined in \eqref{aIBF_RE_LS}, \eqref{mIBF_RE_LS}, and \eqref{epIBF_RE_LS}, respectively, and computed for the data generated from the location-scale model (Scenario 1). Similar values obtained under Scenario 2 where the data were drawn from the random effects model are depicted in Figures \ref{fig:aIBF_RE} to \ref{fig:epIBF_RE}. In almost all of the considered cases we observe that the average IBF and the median IBF are negative as well as the empirical probability IBF is smaller than $0.5$ when the data were generated from the location-scale model, and they are positive and the empirical probability IBF is larger than $0.5$ when the data were drawn from the random effects model, thus supporting the model selection based on the IBF factors defined in Section \ref{sec:IBF-LSM-REM}. Some minor deviations from this observation are present only when $n=5$, the data were simulated from the location-scale model, and the values of $\tau_{LS}$ are not large for $r \neq 0$. Interestingly, when $n=5$ and $r=0.0$, then the random effects model is preferred independently whether the data were generated following Scenario 1 or Scenario 2.

In general, we conclude that the model selection with intrinsic Bayes factor detects considerably often the random effects model when it is the true model than the location-scale model when it is the true model. This finding is in line with the results of \citet{bodnar2016evaluation}, who showed that the random effects model is more robust for the model misspecifications. As the sample size increases the performance of the three considered model selection criteria based on the IBF improves. The average IBF and the median IBF become larger in the absolute values, while the empirical probability IBF is close to one for the Scenario 2 and close to 0 for Scenario 1. Already for $n=10$, the values of Figure \ref{fig:epIBF_RE} indicate that the random effects model is correctly chosen in almost all of the considered values when $r \in \{-0.8, -0.4, 0.8\}$ and in $80\%$ for $r \in \{0,0.4\}$, while the probability of correctly specifying the location-scale model is between 70\% and 80\% being slightly smaller for $r=-0.4$. The results for $n=15$ and $n=20$ are even more stronger. For example, the random effects model is chosen with probability larger $0.9$ in both cases $r=0$ and $r=0.4$ for all considered values $\tau_{RE}>0$ (cf., Figure \ref{fig:epIBF_RE}). Finally, we note that only a minor impact of the correlation coefficient $r$ is present on the computed values of the intrinsic Bayes factors.

\section{Modeling dark uncertainty in the measurements of the Newtonian constant of gravitation }\label{sec:emp}

In the year 1687 Newton published his famous $principia$ in which he presented his thesis on the law of general gravitation. Newton proposed that the force that cause an apple to fall from a tree is the same kind of force that keep the moon to orbit the earth, this would be the force of gravitational attraction. In most textbooks in physics this law is written by an equation $F=G\frac{m_1m_2}{r^2}$. The constant of proportionality, $G$, is called the general constant of gravitation of the Newtonian constant of gravitation, and it is a natural constant, i.e., this constant is the same throughout the whole universe. It determines the strength of the gravitational force, given mass and distance. If we imagine two objects being put out in space far away from any stars and planets at a distance of $1\,\text{m}$ from each other and they both have the mass $1\,\text{kg}$, then the objects exert a gravitational force on each other by exactly $F=G\approx 0.000\,000\,000\,066\,7\, \text{N} = 6.67\cdot 10^{-11} \, \text{N}$.


\begin{table}[h!t]
\begin{center}
\begin{tabular}{ l | c | c }
	\hline\hline
	Study & Measurement (G$/10^{-11}$) & Uncertainty (G$/10^{-11}$) \\ \hline
	NIST-82	& 6.67248 & 0.00043 \\
	TR\&D-96 & 6.6729 & 0.00050 \\
	LANL-97	& 6.67398 & 0.00070 \\
	UWash-00 & 6.674255 & 0.000092 \\
	BIPM-01	& 6.67559 & 0.00027 \\
	UWup-02	& 6.67422 & 0.00098 \\
	MSL-03	& 6.67387 & 0.00027 \\
	HUST-05	& 6.67222 & 0.00087 \\
	UZur-06	& 6.67425 & 0.00012 \\
	HUST-09	& 6.67349 & 0.00018 \\
	JILA-10	& 6.67260 & 0.00025 \\
	BIPM-14	& 6.67554 & 0.00016 \\
	LENS-14	& 6.67191 & 0.00099 \\
	UCI-14	& 6.67435 & 0.00013 \\
	HUST-TOS-18	& 6.674184 & 0.000078 \\
	HUST-AAF-18	& 6.674484 & 0.000078 \\
	\hline \hline
\end{tabular}
\end{center}
\caption{Measurement results for the Newtonian constant of gravitation G together with uncertainties (data are from Figure 1 in \citet{bodnar2020bayesian}).}\label{tab:G}
\end{table}

Sixteen measurements of the Newtonian constant of gravitation together with their uncertainties are provided in Table \ref{tab:G} and are shown in Figure \ref{fig:G}, from which it might be concluded about the presence of dark uncertainty. For capturing the effect of heterogeneity, one can use the location-scale model or the random effects model, which have already been applied for these purposes (\citet{codata2016}, \citet{bodnar2020bayesian}).

\begin{figure}[h!t]
	\begin{center}
		\includegraphics[scale=0.7, clip=true, trim=0 10 0 10]{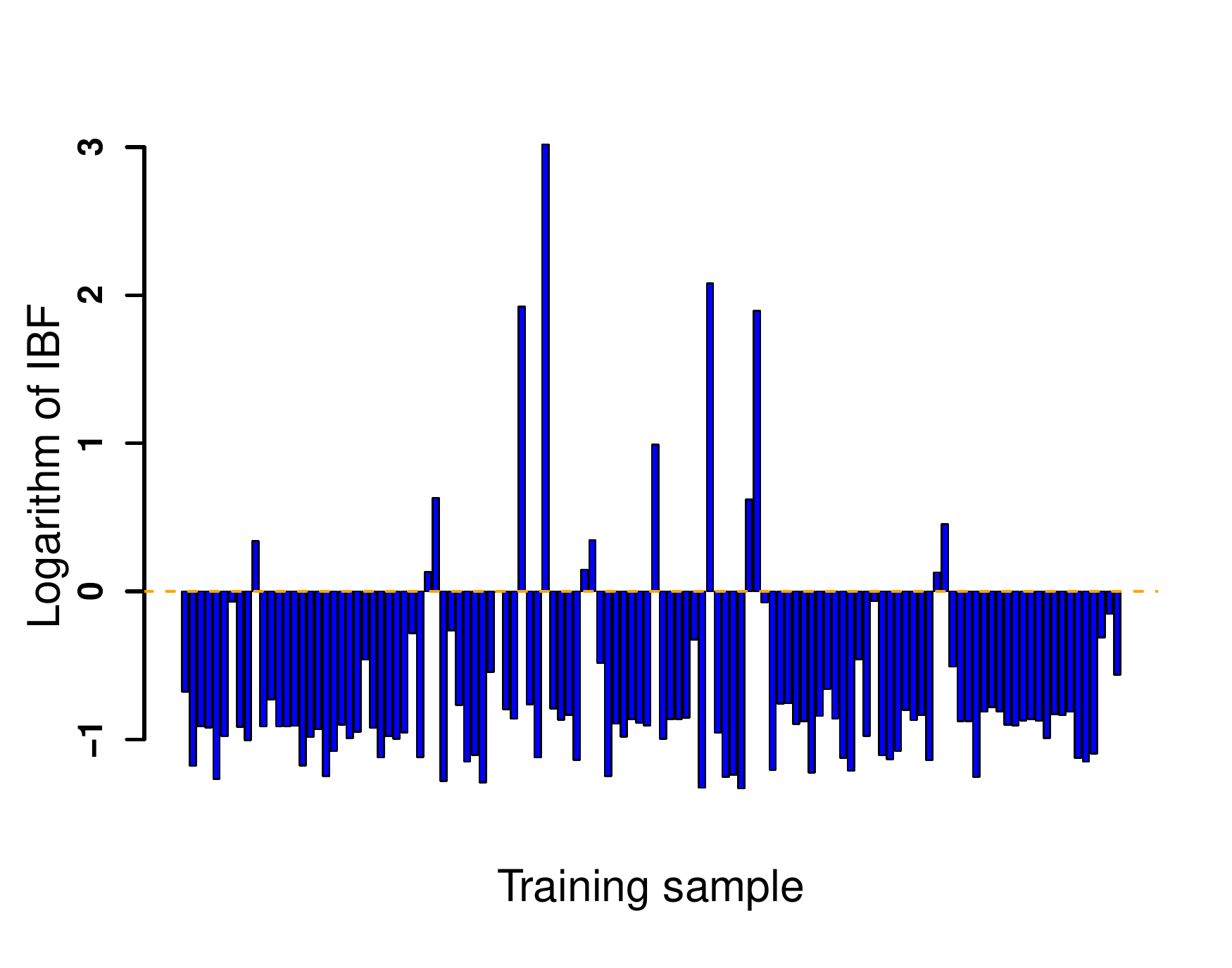}
	\end{center}
	\caption{Logarithm of intrinsic Bayes factors computed for each of the $L=120$ minimal training samples of size $m=2$ by using data for the Newtonian constant of gravitation from Table \ref{tab:G}.} \label{fig:G-IBF}
\end{figure}

In order to make a preference for one of the two models, we assign a reference prior to the parameters of two models and perform the Bayesian model selection based on the intrinsic Bayesian factor. The application of a non-informative reference prior is motivated by the absence of information about the parameters of the two models which can be used to determine an informative prior. Using that the size of the minimal training sample is $m=2$, we computed the intrinsic Bayes factor for all 120 possible specifications of the training sample consisting of two measurement results.

The resulting values of the IBF for comparing the random effects model to the location-scale model are depicted in Figure \ref{fig:G-IBF}. We observe that in most of the cases the IBF is negative meaning that the location-scale model is preferable. Only for 13 training samples out of 120 possible training samples the random effects model is chosen. This leads to the value of the empirical probability IBF equal to $epB_{M_{RE}M_{LS}}^I=0.1083$. The other two measures for Bayesian model selection discussed in Section \ref{sec:IBF-LSM-REM} are negative and they are given by
\begin{equation}
	aB_{M_{RE}M_{LS}}^I=-0.6814 \quad \text{and} \quad mB_{M_{RE}M_{LS}}^I=-0.8777,
\end{equation}
which support the application of the location-scale model to fit the heterogeneity in the measurements of the Newtonian constant of gravitation.

\section{Summary} \label{sec:sum}

In many applications the variability of the individual studies, which are pooled together to determine the overall mean value, cannot be explained by the reported variabilities of each studies. This leads to the conclusion of the presence of heterogeneity, also known as the dark uncertainty. Two mostly used models for the dark uncertainty are the location-scale model and the random effects models. The location-scale model is usually applied in connection to the Birge ratio method for adjusting the measurements of the fundamental constants in physics and chemistry, while the random effects model is the classical tool used for meta-analysis in medicine.

These two models are non-nested and describe the heterogeneity from different perspectives. While the random effects model suggests to adjust the reported variabilities of individual studies by adding a constant to the reported uncertainties, the location-scale model adjusts the underrated individual uncertainties by a multiplicative constant. Since the two approaches are non-nested statistical models, the methods of the frequentist statistics, like the likelihood ratio test cannot be used for choosing the model which provides a better fit to data. In such a situation methods of Bayesian statistics for model selection should be opted. Moreover, in most of applications no additional information is provided for the model parameters and thus noninformative priors should be used in the derivation of Bayesian inference procedures. This has a strong impact on the Bayesian model selection, since the noninformative priors are usually improper, thus leading to the improper marginal distributions of data needed in the computation of the Bayes factor, a rule for choosing one of two competitive models in Bayesian statistics.

The solution to the latter problem was suggested in the seminal paper of \citet{berger1996intrinsic}, who introduced the intrinsic Bayes factor for improper noninformative priors. Endowing the parameters of the location-scale model and of the random effects model with reference prior (\citet{BergerBernardo1992c}), we derive the expression of the intrinsic Bayes factor for comparing the random effects model to the location-scale model. In the case of the location-scale model the analytical expression of the marginal distributions of the whole data and of the training data are derived, while one-dimensional integral presentations of the marginal distribution are obtained in the case of the random effects model. These integrals can be computed numerically using Simpson's rule.

The performance of the suggested Bayesian model selection procedure has been investigated within an extensive simulation study. It was found that the procedure can distinguish between two models when the data were generated from the random effects model is generated already for small values of heterogeneity parameter and when the sample size is small, like five observations are only present. In contrast, when the data were generated from the location-scale model, then the derived model selection procedure requires larger sample size and larger values of the heterogeneity parameter to detect the true model. Such findings are in line with the previous results documented in \citet{bodnar2016evaluation}, where it is shown that the random effects model is more robust to the model misspecification than the location-scale model. In particular, it was shown numerically that even the data were drawn from the location-scale model, the random effects model is still persistent making a good estimate of the overall mean. That was not the case when the data were generated from the random effects model and the location-scale model is applied.

The derived theoretical findings are applied to the measurement results of the Newtonian constant of gravitation. The results of the empirical study support the application of the location-scale model for the computation of the Newtonian constant of gravitation. Both the average IBF and the median IBF are significantly smaller than zero, while the empirical probability IBF was less than 11\% indicating that in majority of the cases related to the specification of the training sample the location-scale model ia favored to the random effects model. As such our findings are supportive to the application of the Birge ratio method for the adjustment of the fundamental physical constants as it is currently used in the computation of the CODATA 2018 values of fundamental constants (see, \citet{codata2018}). Finally, it can to be noted that the obtained results only document that the location-scale model is favored compared to the random effects model, but they do not answer the general question what is the best approach for data inconsistency or heterogeneous data.

\section*{Acknowledgement}

This research was partially supported by National Institute of Standards and Technology (NIST) Exchange Visitor Program. The first author is grateful to the Statistical Engineering Division of National Institute of Standards and Technology (NIST) for providing an excellent and inspiring environment for research. Olha Bodnar also acknowledges valuable support from the internal grand (R\"{o}rlig resurs) of the \"{O}rebro University.

\section{Appendix}\label{sec:app}
In this section, the proofs of Theorems \ref{th1} and \ref{th2} are presented.

\begin{proof}[Proof of Theorem \ref{th1}:]
\begin{enumerate}[(i)]
\item In the case of the whole date, the marginal distribution is obtained by integration of the parameters $\mu$ and $\tau_{LS}$ in the joint probability density function $f(\mathbf{x},\mu,\tau_{LS}|M_{LS})$. It holds that
\begin{eqnarray*}
m(\mathbf{x}|M_{LS})&=&\int_{0}^{\infty}\int_{-\infty}^{\infty}f(\mathbf{x}|\mu,\tau_{LS},M_{LS})\pi^N(\mu,\tau_{LS}) \ d\mu \ d\tau_{LS} \\ 	
&=& \int_{0}^{\infty}\int_{-\infty}^{\infty}\frac{\big(\mathsf{det}(\mathbf{U})\big)^{-\frac{1}{2}}}{(2\pi)^{n/2}}\tau_{LS}^{-n}\mathsf{exp}
\bigg(-\frac{(\mathbf{x}-\mu\mathbf{1})^T\mathbf{U}^{-1}(\mathbf{x}-\mu\mathbf{1})}{2\tau_{LS}^2}\bigg)\frac{1}{\tau_{LS}} \ d\mu \ d\tau_{LS} \\
&=&\frac{\big(\mathsf{det}(\mathbf{U})\big)^{-\frac{1}{2}}}{2(2\pi)^{n/2}}\int_{-\infty}^{\infty}
\Bigg(\int_{0}^{\infty}\zeta^{-(\frac{n}{2}+1)}\textmd{exp}
\bigg(-\frac{\frac{(\mathbf{x}-\mu\mathbf{1})^T\mathbf{U}^{-1}(\mathbf{x}-\mu\mathbf{1})}{2}}{\zeta}\bigg)  d\zeta\Bigg) d\mu,
\end{eqnarray*}
where the last equality is obtained by making the transformation $\tau_{LS}^2 = \zeta$ with the Jacobian $\frac{1}{2\sqrt{\zeta}}$.

The inner integral with respect to $\zeta$ has an integrand that is the kernel of the density function of the inverse gamma distribution $\textmd{inv-gamma}(n/2, (\mathbf{x}-\mu\mathbf{1})^T\mathbf{U}^{-1}(\mathbf{x}-\mu\mathbf{1})/2)$. Hence,
\begin{eqnarray*}
	m(\mathbf{x}|M_{LS}) & =&  
\frac{\big(\mathsf{det}(\mathbf{U})\big)^{-\frac{1}{2}}\Gamma\big(\frac{n}{2}\big)}{2\pi^{n/2}} \int_{-\infty}^{\infty}\big((\mathbf{x}-\mu\mathbf{1})^T\mathbf{U}^{-1}(\mathbf{x}-\mu\mathbf{1})\big)^{-n/2} \ d\mu \ .
\end{eqnarray*}
	
Using that
\begin{equation}
	(\mathbf{x}-\mu\mathbf{1})^T\mathbf{U}^{-1}(\mathbf{x}-\mu\mathbf{1})  = \mathbf{x}^T\mathbf{Qx}+\mathbf{1}^T\mathbf{U}^{-1}\mathbf{1}\bigg(\mu-\frac{\mathbf{1}^T\mathbf{U}^{-1}\mathbf{x}}{\mathbf{1}^T\mathbf{U}^{-1}\mathbf{1}}\bigg)^2 \ , \nonumber	
\end{equation}
with $\mathbf{Q}$ defined in \eqref{Q:LSM}, we get
\begin{eqnarray*}
&&\hspace{-1cm}	m(\mathbf{x}|M_{LS})  =  \frac{\big(\mathsf{det}(\mathbf{U})\big)^{-\frac{1}{2}}\Gamma\big(\frac{n}{2}\big)}{2\pi^{n/2}} (\mathbf{x}^T\mathbf{Qx})^{-n/2} \int_{-\infty}^{\infty}\left(1+\frac{\mathbf{1}^T\mathbf{U}^{-1}\mathbf{1}}{\mathbf{x}^T\mathbf{Qx}}
\bigg(\mu-\frac{\mathbf{1}^T\mathbf{U}^{-1}\mathbf{x}}{\mathbf{1}^T\mathbf{U}^{-1}\mathbf{1}}\bigg)^2\right)^{-n/2} d\mu \\
 & =& \frac{\Gamma(\frac{n-1}{2})\big(\mathbf{x}^T\mathbf{Q}\mathbf{x}\big)^{-\frac{n-1}{2}}}{\big(\mathsf{det}(\mathbf{U})\big)^{1/2} 2\pi^{\frac{n-1}{2}}\sqrt{\mathbf{1}^T\mathbf{U}^{-1}\mathbf{1}}}\\
 &\times&\int_{-\infty}^{\infty}\frac{\Gamma\big(\frac{n}{2}\big)}{\sqrt{\pi}\Gamma\big(\frac{n-1}{2}\big)}\sqrt{\frac{\mathbf{1}^T\mathbf{U}^{-1}\mathbf{1}}{\mathbf{x}^T\mathbf{Qx}}}
 \left(1+\frac{1}{n-1}\frac{(n-1)\mathbf{1}^T\mathbf{U}^{-1}\mathbf{1}}{\mathbf{x}^T\mathbf{Qx}}
\bigg(\mu-\frac{\mathbf{1}^T\mathbf{U}^{-1}\mathbf{x}}{\mathbf{1}^T\mathbf{U}^{-1}\mathbf{1}}\bigg)^2\right)^{-n/2} d\mu\\
&=&\frac{\Gamma(\frac{n-1}{2})\big(\mathbf{x}^T\mathbf{Q}\mathbf{x}\big)^{-\frac{n-1}{2}}}{\big(\mathsf{det}(\mathbf{U})\big)^{1/2} 2\pi^{\frac{n-1}{2}}\sqrt{\mathbf{1}^T\mathbf{U}^{-1}\mathbf{1}}},
\end{eqnarray*}
where the last equality follows by recognizing the density function of the $t$-distribution under the integral in the second line.

\item The statement of the second part of the theorem is obtained by following the proof of part (i) and noting that when $m=1$ then $m(x_i|M_{LS})$ becomes a constant and, thus, $m(x_i|M_{LS})$ is not longer a proper density function. When $m=2$ we get
\begin{equation}
	m(\mathbf{x}_{\ell}|M_{LS})=\frac{1}{2\sqrt{(\mathsf{det}(\mathbf{U}_{\ell}))(\mathbf{x}^T\mathbf{Q}_{\ell}\mathbf{x})(\mathbf{1}_2^T\mathbf{U}_{\ell}^{-1}\mathbf{1}_2)}} \ ,
\end{equation}
where $\mathbf{Q}_{\ell}$ is given in \eqref{Ql:LSM}. Since the last expression is a proper density function and the two elements of $\mathbf{x}$ are arbitrary chosen, the theorem is proved.
\end{enumerate}
\end{proof}

\begin{proof}[Proof of Theorem \ref{th2}:]
\begin{enumerate}[(i)]	
\item In the case of the whole data the marginal distribution is obtained by integration of the parameters $\mu$ and $\tau_{RE}$ in the joint probability density function $f(\mathbf{x},\mu,\tau_{RE}|M_{RE})$ and it is expressed as
\begin{eqnarray*}
&&m(\mathbf{x}|M_{RE})=\int_{0}^{\infty}\int_{-\infty}^{\infty}f(\mathbf{x}|\mu,\tau_{RE},M_{RE})\pi^N(\tau_{RE}) \ d\mu \ d\tau_{RE} \\ 	
&=& \int_{0}^{\infty}\int_{-\infty}^{\infty}\frac{\big(\mathsf{det}(\mathbf{U}+\tau_{RE}^2\mathbf{I})\big)^{-\frac{1}{2}}}{(2\pi)^{n/2}}
\mathsf{exp}\bigg(-\frac{1}{2}(\mathbf{x}-\mu\mathbf{1})^T\big(\mathbf{U}+\tau_{RE}^2\mathbf{I}\big)^{-1}(\mathbf{x}-\mu\mathbf{1})\bigg) \pi^N(\tau_{RE}) \ d\mu \ d\tau_{RE} , \end{eqnarray*}
where $\pi^N(\tau_{RE})$ is given in \eqref{prior:REM}.

Using the identity
\begin{eqnarray*}
	&&\mathsf{exp}\bigg(-\frac{1}{2}(\mathbf{x}-\mu\mathbf{1})^T\big(\mathbf{U}+\tau_{RE}^2\mathbf{I}\big)^{-1}(\mathbf{x}-\mu\mathbf{1})\bigg) \\
	&=& \mathsf{exp}\Big(-\frac{1}{2}\mathbf{x}^T\mathbf{Q}(\tau_{RE}^2)\mathbf{x}\Big)\mathsf{exp}\Bigg(-\frac{1}{2}\mathbf{1}^T\big(\mathbf{U}+\tau_{RE}^2\mathbf{I}\big)^{-1}
\mathbf{1}\bigg(\mu-\frac{\mathbf{1}^T\big(\mathbf{U}+\tau_{RE}^2\mathbf{I}\big)^{-1}\mathbf{x}}{\mathbf{1}^T\big(\mathbf{U}+\tau_{RE}^2\mathbf{I}\big)^{-1}\mathbf{1}}\bigg)^2\Bigg),
\end{eqnarray*}
with $\mathbf{Q}(\tau_{RE}^2)$ defined in \eqref{Q:REM} and integrating over $\mu$, we get
\begin{eqnarray*}
m(\mathbf{x}|M_{RE})&=&\int_{0}^{\infty} \frac{\big(\mathsf{det}(\mathbf{U}+\tau_{RE}^2\mathbf{I})\big)^{-\frac{1}{2}}}{(2\pi)^{(n-1)/2}} \frac{\mathsf{exp}\Big(-\frac{1}{2}\mathbf{x}^T\mathbf{Q}(\tau_{RE}^2)\mathbf{x}\Big) }{\sqrt{\mathbf{1}^T\big(\mathbf{U}+\tau_{RE}^2\mathbf{I}\big)^{-1}\mathbf{1}}} \pi^N(\tau_{RE})
\ d\tau_{RE} ,
\end{eqnarray*}
which is the statement of the first part of the theorem.

\item Part (ii) of the theorem follows from the proof of the first part.
\end{enumerate}
\end{proof}

{\footnotesize
\bibliography{Bibfile2021-03-01}

\begin{thebibliography}{}

\bibitem[Ades et~al., 2005]{AdesHiggins2005}
Ades, A.~E., Lu, G., and Higgins, J. (2005).
\newblock The interpretation of random-effects meta-analysis in decision
  models.
\newblock {\em Medical Decision Making}, 25(6):646--654.

\bibitem[Alighanbari et~al., 2020]{alighanbari2020precise}
Alighanbari, S., Giri, G., Constantin, F.~L., Korobov, V., and Schiller, S.
  (2020).
\newblock Precise test of quantum electrodynamics and determination of
  fundamental constants with {HD+} ions.
\newblock {\em Nature}, 581(7807):152--158.

\bibitem[Berger and Bernardo, 1992]{BergerBernardo1992c}
Berger, J. and Bernardo, J.~M. (1992).
\newblock On the development of reference priors.
\newblock In Bernardo, J.~M., Berger, J., Dawid, A.~P., and Smith, A. F.~M.,
  editors, {\em Bayesian Statistics}, volume~4, pages 35--60. Oxford:
  University Press.

\bibitem[Berger et~al., 2009]{berger2009formal}
Berger, J.~O., Bernardo, J.~M., and Sun, D. (2009).
\newblock The formal definition of reference priors.
\newblock {\em The Annals of Statistics}, 37(2):905--938.

\bibitem[Berger and Pericchi, 1996]{berger1996intrinsic}
Berger, J.~O. and Pericchi, L.~R. (1996).
\newblock The intrinsic {B}ayes factor for model selection and prediction.
\newblock {\em Journal of the American Statistical Association},
  91(433):109--122.

\bibitem[Berger and Pericchi, 2001]{berger2001objective}
Berger, J.~O. and Pericchi, L.~R. (2001).
\newblock Objective {B}ayesian methods for model selection: {I}ntroduction and
  comparison.
\newblock In {\em Model Selection}, pages 135--207. Hayward, CA: Institute of
  Mathematical Statistics.

\bibitem[Birge, 1932]{Birge1932}
Birge, R.~T. (1932).
\newblock The calculation of errors by the method of the least squares.
\newblock {\em Physical Reviev}, 40(2):207--227.

\bibitem[Bodnar and Elster, 2014a]{bodnar2014analytical}
Bodnar, O. and Elster, C. (2014a).
\newblock Analytical derivation of the reference prior by sequential
  maximization of {S}hannon's mutual information in the multi-group parameter
  case.
\newblock {\em Journal of Statistical Planning and Inference}, 147:106--116.

\bibitem[Bodnar and Elster, 2014b]{bodnar2014adjustment}
Bodnar, O. and Elster, C. (2014b).
\newblock On the adjustment of inconsistent data using the {B}irge ratio.
\newblock {\em Metrologia}, 51(5):516.

\bibitem[Bodnar and Elster, 2020]{bodnar2020assessing}
Bodnar, O. and Elster, C. (2020).
\newblock Assessing laboratory effects in key comparisons with two transfer
  standards measured in two petals: A {B}ayesian approach.
\newblock In {\em 13th International Workshop on Intelligent Statistical
  Quality Control 2019, IWISQC 2019, 12 August 2019 through 14 August 2019},
  pages 1--18. City University of Hong Kong.

\bibitem[Bodnar et~al., 2016a]{bodnar2016evaluation}
Bodnar, O., Elster, C., Fischer, J., Possolo, A., and Toman, B. (2016a).
\newblock Evaluation of uncertainty in the adjustment of fundamental constants.
\newblock {\em Metrologia}, 53(1):S46.

\bibitem[Bodnar et~al., 2017]{bodnar2017bayesian}
Bodnar, O., Link, A., Arendack{\'a}, B., Possolo, A., and Elster, C. (2017).
\newblock Bayesian estimation in random effects meta-analysis using a
  non-informative prior.
\newblock {\em Statistics in Medicine}, 36(2):378--399.

\bibitem[Bodnar et~al., 2016b]{BodnarLinkElster2015}
Bodnar, O., Link, A., and Elster, C. (2016b).
\newblock Objective {B}ayesian inference for a generalized marginal random
  effects model.
\newblock {\em Bayesian Analysis}, 11(1):25--45.

\bibitem[Bodnar et~al., 2020]{bodnar2020bayesian}
Bodnar, O., Muhumuza, R.~N., and Possolo, A. (2020).
\newblock Bayesian inference for heterogeneity in meta-analysis.
\newblock {\em Metrologia}, 57(6):064004.

\bibitem[Brockwell and Gordon, 2001]{brockwell2001}
Brockwell, S.~E. and Gordon, I.~R. (2001).
\newblock A comparison of statistical methods for meta-analysis.
\newblock {\em Statistics in Medicine}, 20(6):825--840.

\bibitem[Browne and Draper, 2006]{BrowneDraper06}
Browne, W.~J. and Draper, D. (2006).
\newblock A comparison of {B}ayesian and likelihood-based methods for fitting
  multilevel models.
\newblock {\em Bayesian Analysis}, 1(3):473--514.

\bibitem[Clarke and Yuan, 2004]{ClarkeYuan2004}
Clarke, B. and Yuan, A. (2004).
\newblock Partial information reference priors: {D}erivation and
  interpretations.
\newblock {\em Journal of Statistical Planning and Inference}, 123(2):313--345.

\bibitem[Cochran, 1937]{Cochran37}
Cochran, W.~G. (1937).
\newblock Problems arising in the analysis of a series of similar experiments.
\newblock {\em Journal of the Royal Statistical Society - Supplement},
  4:102--118.

\bibitem[Cochran, 1954]{Cochran54}
Cochran, W.~G. (1954).
\newblock The combination of estimates from different experiments.
\newblock {\em Biometrics}, 10:109--129.

\bibitem[Fern\'andez and Steel, 1999]{FernandezSteel1999}
Fern\'andez, C. and Steel, M. F.~J. (1999).
\newblock Reference priors for the general location-scale model.
\newblock {\em Statistics \& Probability Letters}, 43:377--384.

\bibitem[Gelman, 2006]{Gelman06}
Gelman, A. (2006).
\newblock Prior distributions for variance parameters in hierarchical models.
\newblock {\em Bayesian Analysis}, 1:515--533.

\bibitem[Givens and Hoeting, 2012]{givens2012computational}
Givens, G.~H. and Hoeting, J.~A. (2012).
\newblock {\em Computational Statistics}, volume 710.
\newblock John Wiley \& Sons.

\bibitem[Guolo and Varin, 2017]{guolo2017random}
Guolo, A. and Varin, C. (2017).
\newblock Random-effects meta-analysis: {T}he number of studies matters.
\newblock {\em Statistical Methods in Medical Research}, 26(3):1500--1518.

\bibitem[Hardy and Thompson, 1998]{hardy1998detecting}
Hardy, R.~J. and Thompson, S.~G. (1998).
\newblock Detecting and describing heterogeneity in meta-analysis.
\newblock {\em Statistics in Medicine}, 17(8):841--856.

\bibitem[Higgins et~al., 2009]{Higgins2009}
Higgins, J., Thompson, S.~G., and Spiegelhalter, D.~J. (2009).
\newblock A re-evaluation of random-effects meta-analysis.
\newblock {\em Journal of the Royal Statistical Society: Ser. A (Statistics in
  Society)}, 172(1):137--159.

\bibitem[Hill, 1965]{Hill65}
Hill, B.~M. (1965).
\newblock Inference about variance components in the one-way model.
\newblock {\em Journal of the American Statistical Association}, 60:806--825.

\bibitem[Jeffreys, 1946]{Jeffreys1946}
Jeffreys, H. (1946).
\newblock An invariant form for the prior probability in estimation problems.
\newblock {\em Proceedings of the Royal Society A}, 186:453--461.

\bibitem[Jones et~al., 2018]{jones2018use}
Jones, H.~E., Ades, A., Sutton, A.~J., and Welton, N.~J. (2018).
\newblock Use of a random effects meta-analysis in the design and analysis of a
  new clinical trial.
\newblock {\em Statistics in Medicine}, 37(30):4665--4679.

\bibitem[Kacker, 2004]{Kacker2004}
Kacker, R.~N. (2004).
\newblock Combining information from interlaboratory evaluations using a random
  effects model.
\newblock {\em Metrologia}, 41:132--136.

\bibitem[Lambert et~al., 2005]{Lambert2005}
Lambert, P.~C., Sutton, A.~J., Burton, P.~R., Abrams, K.~R., and Jones, D.~R.
  (2005).
\newblock How vague is vague? {A} simulation study of the impact of the use of
  vague prior distributions in mcmc using winbugs.
\newblock {\em Statistics in Medicine}, 24(15):2401--2428.

\bibitem[Laplace, 1812]{Laplace1812}
Laplace, P.~S. (1812).
\newblock {\em Th\'{e}orie Analitique des Probabilit\'{e}s}.
\newblock Paris: Courcier.

\bibitem[Mandel and Paule, 1970]{mandel-1970}
Mandel, J. and Paule, R. (1970).
\newblock Interlaboratory evaluation of a material with unequal numbers of
  replicates.
\newblock {\em Analytical Chemistry}, 42(11):1194--1197.

\bibitem[Mohr et~al., 2016]{codata2016}
Mohr, P.~J., Newell, D.~B., and Taylor, B.~N. (2016).
\newblock {CODATA} recommended values of the fundamental physical constants:
  2014.
\newblock {\em Reviews of Modern Physics}, 88(3):035009.

\bibitem[Newell et~al., 2018]{newell2018codata}
Newell, D.~B., Cabiati, F., Fischer, J., Fujii, K., Karshenboim, S.~G.,
  Margolis, H.~S., de~Mirandes, E., Mohr, P.~J., Nez, F., and Pachucki, K.
  (2018).
\newblock The {CODATA} 2017 values of h, e, k, and {$N_A$} for the revision of
  the {SI}.
\newblock {\em Metrologia}, 55(1):L13--L16.

\bibitem[O'Hagan, 1995]{o1995fractional}
O'Hagan, A. (1995).
\newblock Fractional bayes factors for model comparison.
\newblock {\em Journal of the Royal Statistical Society: Series B
  (Methodological)}, 57(1):99--118.

\bibitem[Rao, 1997]{Rao1997}
Rao, P. S. R.~S. (1997).
\newblock {\em Variance Components Estimation: Mixed Models, Methodologies, and
  Applications}.
\newblock Chapman and Hall, London.

\bibitem[Ruhkin, 2003]{rukhin-2003}
Ruhkin, A. (2003).
\newblock Two procedures of meta-analysis in clinical trials and
  interlaboratory studies.
\newblock {\em Tatra Mountains Mathematical Publications}, 26(155):155--168.

\bibitem[Rukhin, 2013]{Rukhin2013}
Rukhin, A.~L. (2013).
\newblock Estimating heterogeneity variance in meta-analysis.
\newblock {\em Journal of the Royal Statistical Society: Ser. B}, 75:451--469.

\bibitem[Searle et~al., 2006]{Searle2006}
Searle, S.~R., Casella, G., and Mc~Culloch, C.~E. (2006).
\newblock {\em Variance Components}.
\newblock John Wiley \& Sons, New Jersey.

\bibitem[Thompson and Ellison, 2011]{thompson2011dark}
Thompson, M. and Ellison, S.~L. (2011).
\newblock Dark uncertainty.
\newblock {\em Accreditation and Quality Assurance}, 16(10):483--487.

\bibitem[Tiao and Tan, 1965]{TiaoTan1965}
Tiao, G.~C. and Tan, W.~Y. (1965).
\newblock Bayesian analysis of random-effect models in the analysis of
  variance. i: Posterior distribution of variance components.
\newblock {\em Biometrika}, 52:37--53.

\bibitem[Tiesinga et~al., 2021]{codata2018}
Tiesinga, E., Mohr, P.~J., Newell, D.~B., and Taylor, B.~N. (2021).
\newblock {CODATA} recommended values of the fundamental physical constants:
  2018.
\newblock {\em Reviews of Modern Physics}, to appear.

\bibitem[Toman et~al., 2012]{TomanFischerElster2012}
Toman, B., Fischer, J., and Elster, C. (2012).
\newblock Alternative analyses of measurements of the {P}lanck constant.
\newblock {\em Metrologia}, 49(4):567--571.

\bibitem[Turner et~al., 2015]{Turner2015}
Turner, R.~M., Jackson, D., Wei, Y., Thompson, S.~G., and Higgins, J. (2015).
\newblock Predictive distributions for between-study heterogeneity and simple
  methods for their application in {B}ayesian meta-analysis.
\newblock {\em Statistics in Medicine}, 34(6):984--998.

\bibitem[Veroniki et~al., 2019]{veroniki2019methods}
Veroniki, A.~A., Jackson, D., Bender, R., Kuss, O., Langan, D., Higgins, J.~P.,
  Knapp, G., and Salanti, G. (2019).
\newblock Methods to calculate uncertainty in the estimated overall effect size
  from a random-effects meta-analysis.
\newblock {\em Research synthesis methods}, 10(1):23--43.

\bibitem[Weise and W{\"o}ger, 2000]{weise2000removing}
Weise, K. and W{\"o}ger, W. (2000).
\newblock Removing model and data non-conformity in measurement evaluation.
\newblock {\em Measurement Science and Technology}, 11(12):1649.

\bibitem[Yates and Cochran, 1938]{YatesCochran38}
Yates, F. and Cochran, W.~G. (1938).
\newblock The analysis of groups of experiments.
\newblock {\em Journal of Agricultural Science}, 28:556--580.

\end{thebibliography}
}

\end{document}